\newtheorem{theorem}{Theorem} 
\newtheorem{lemma}[theorem]{Lemma}
\newbox\ProofSym \setbox\ProofSym=\hbox{%
	\unitlength=0.18ex%
	\begin{picture}(10,10) \put(0,0){\framebox(9,9){}}
	\put(0,3){\framebox(6,6){}}
	\end{picture}}
\let\geq\geqslant \let\leq\leqslant
\newcommand{\conv}{\mathsf{CH}}
\newcommand{\bbst}[1]{\ensuremath{\mathcal{T}(#1)}}
\title{Efficient Planar Two-Center Algorithms\thanks{This work was supported by Institute of Information \& communications Technology Planning \& Evaluation (IITP) grant funded by the Korea government (MSIT) (IITP-2017-0-00905, Software Star Lab
(Optimal Data Structure and Algorithmic Applications in Dynamic Geometric Environment).}%optional
}
\author{Jongmin Choi\thanks{Department of Computer Science and Engineering, Pohang University of Science and Technology, Pohang, Korea, \texttt{icothos@postech.ac.kr}}
\and Hee-Kap Ahn\thanks{Department of Computer Science and Engineering, Graduate School of Artificial Intelligence, Pohang University of Science and Technology, Pohang, Korea, \texttt{heekap@postech.ac.kr} \texttt{https://orcid.org/0000-0001-7177-1679}}}
\begin{document}
\date{}
\maketitle
\begin{abstract}
We consider the planar Euclidean two-center problem in which given $n$ points in the plane
we are to find two congruent disks of the smallest radius covering the points. 
We present a deterministic $O(n \log n)$-time algorithm for the case that the centers of 
the two optimal disks are close to each other, that is, the overlap of the two optimal disks 
is a constant fraction of the disk area. 
We also present a deterministic $O(n\log n)$-time algorithm 
for the case that the input points are in convex position. Both results improve the previous best $O(n\log n\log\log n)$ bound on the problems. 
\end{abstract}

\section{Introduction}
In the planar 2-center problem we are given a set $S$ of $n$ points in the
plane and want to find two congruent disks of the smallest radius covering the
points. This is a special case (with 2 supply points) of the facility location problem in
which given a set of demands in $\mathbb{R}^d$ we are to find a set of
supply points in $\mathbb{R}^d$ minimizing the maximum distance from a
demand point to its closest supply point.

There has been a fair amount of work on the planar 2-center problem in
the 1990s.  Hershberger and Suri~\cite{HERSHBERGER1996453} considered
a decision version of the problem: given a radius $r$, determine if
$S$ can be covered by two disks of radius $r$.  They gave an
$O(n^2\log n)$-time algorithm for the problem.  This was improved
slightly by Hershberger to $O(n^2)$ time~\cite{HERSHBERGER199323}. 
By combining this result with the parametric-search paradigm of
Megiddo~\cite{Megiddo1983}, Agarwal and Sharir~\cite{Agarwal1994} gave
an $O(n^2\log^3 n)$-time algorithm for the planar 2-center problem. A few
other results include the expander-based approach by Katz and
Sharir~\cite{Katz1993} avoiding the parametric search, a randomized
algorithm with expected $O(n^2\log^2n\log\log n)$ time by
Eppstein~\cite{Eppstein1992}, and a deterministic $O(n^2\log n)$-time
algorithm by Jaromczyk and Kowaluk~\cite{Jaromczyk1994} using new
geometric insights.

% \complain{Sharir divide case as $d<r^*$ ,$r^*<d<3r^*$ and $3r^*<d$ but, 
% Eppstein divide case as $d<2r^*$ ,$\epsilon r^*<d$. Writed in bellow paragraphs about Eppstein not Sharir.}
In 1997, Sharir made a 
% Sharir~\cite{Sharir1997} made a 
substantial breakthrough by 
dividing the problem into cases, depending on the optimal radius $r^*$ and
 the distance $d$  %$d<r^*, r^*<d<3r^*,$ and $3r^*<d$, where
between the optimal centers. He presented a $O(n\log^9 n)$-time algorithm
by combining an $O(n\log^2 n)$-time algorithm
deciding for a given radius $r$ if the points can be covered by two
disks of radius $r$ and parametric search.
Later, Eppstein~\cite{Eppstein1997} considered the case
that % the two optimal disks are disjoint, 
% more precisely,
$d\geq cr^*$ for any positive constant $c$ %and improved the result 
% by considering two cases, $d<2r^*$ and $\epsilon r^*<d$,
and presented an $O(n\log^2 n)$-time algorithm.

%\ccheck{For the case $d\geq 2r^*$, Sharir~\cite{Sharir1997} 
%presented a method to find $r^*$ in $O(n\polylog n)$ time
%% depending on if the optimal disks are well-separated or not.  
%by combining an $O(n\log^2 n)$-time algorithm
%deciding for a given radius $r$ if the points can be covered by two
%disks of radius $r$ and parametric search. % Sharir  
%
%\complain{Check if this is correct! $O(n\log^9 n)$ time for Sharir, $O(n\log^9 n)$? More specific?}

%For a fixed constant \ccheck{$c\in (1,2)$,} \was{$c\in (0,2)$} consider the case that the distance
%between the optimal centers is at least $cr^*$, where $r^*$ is the
%optimal radius. 
%From now on, we will \ccheck{call this case \emph{the
%  well-separated case}.} See Figure~\ref{fig:restricted} left for an illustration.  
  
For the case that the distance between the two optimal centers is % smaller than 
$cr^*$ for a fixed constant $c\in [0,2)$,
the overlap of the optimal disks is a constant
fraction of the disk area. It is known that one can generate a constant number of
points in linear time such that at least one of them lies in the overlap of the optimal
disks~\cite{Sharir1997,Eppstein1997}. 
Thus, the problem reduces to the following restricted problem. % \medskip
See Figure~\ref{fig:restricted} for an illustration.
\vspace{-.5em}
\begin{quote}
  % \noindent
  \textbf{The restricted 2-center problem.} Given a set $S$ of $n$ points
  and a point $o$ in the plane, find two congruent disks $D_1,D_2$ of
  the smallest radius such that $o\in D_1\cap D_2$ and 
  $S$ is covered by $D_1$ and $D_2$. % \medskip
\end{quote}
\vspace{-.5em}
\begin{figure}[tb]
	\centering
	\includegraphics[width=.3\textwidth]{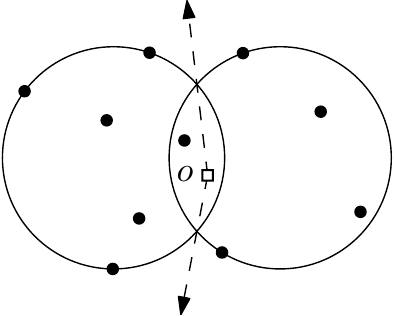}
	\caption{The restricted 2-center problem: two congruent disks of the smallest radius enclosing the points overlap.} 
%	(a) The distance $d$ between the optimal centers is larger than $2r^*$. (b)  }
	\label{fig:restricted}
\end{figure}

\noindent
The restricted 2-center problem can be solved by Sharir's algorithm in
% considered the  and gave an
$O(n\log^9 n)$ time~\cite{Sharir1997}.
Eppstein gave a randomized algorithm which solves the restricted
2-center problem in $O(n\log n\log\log n)$ expected
time~\cite{Eppstein1997}.

In 1999, Chan~\cite{CHAN1999189} made an improvement to
Eppstein's algorithm by showing that the restricted 2-center problem 
can be solved in $O(n\log n)$ time with high probability. 
He also presented a deterministic 
$O(n\log^2 n\log^2\log n)$-time algorithm for the restricted 2-center problem 
by using parametric search.  Together with the $O(n\log^2 n)$-time algorithm for the
well-separated case by Eppstein~\cite{Eppstein1997}, the planar
2-center problem is solved in $O(n\log^2 n)$ time with high
probability, or in $O(n\log^2 n\log^2\log n)$ deterministic time.

Since then there was no improvement over 20 years until
Wang~\cite{WANG2020} very recently gave an $O(n\log n\log\log n)$-time
deterministic algorithm for the restricted 2-center problem, improving
on Chan's method by a $\log n\log\log n$ factor.  Combining Eppstein's
$O(n\log^2 n)$-time algorithm for the well-separated case, the planar
2-center problem can be solved in $O(n\log^2 n)$ time
deterministically.
% \footnote{While we were preparing this submission,
%  we got to know that Wang made this improvement to the restricted 2-center
%  problem which was accepted to
%  SoCG 2020. He kindly provided us the paper upon our request.}  
The improvement is from a new $O(n)$-time sequential decision algorithm, 
precomputing the intersections of disks that are frequently used 
in the parallel algorithms, and a dynamic data structure called \emph{the circular hull}
that maintains the intersection of
the disks containing a fixed set of points. See the right figure in~Figure~\ref{fig:intersection-chull}.

Eppstein showed that 
any deterministic algorithm for the two-center problem requires 
$\Omega(n\log n)$ time in the algebraic decision tree model of computation 
by reduction from max gap problem~\cite{Eppstein1997}. So there is
some gap between the best known running time and the lower bound.
% \complain{Lower bound for both the problem ? Add it here!}

% \complain{A bit more on this result.}
% the circular hull of a set of points.

% \begin{itemize}
% \item \complain{A few wrong results?}
% \item \complain{Previous work on points in convex position}
% \end{itemize}
% There also have been a few works on the special case that points are
% in convex position.

\subsection{Our results}
We consider the restricted 2-center problem and show
that the problem can be solved in $O(n\log n)$ time
deterministically. This improves the previous best
$O(n\log n\log\log n)$ bound by Wang by a $\log\log n$
factor. 
% \ccheck{and matches the lower bound.}  \complain{a brief
%  introduction to the improvements.}

% Summary of our algorithm
Our algorithm follows the framework by Wang~\cite{WANG2020} (and thus
by Chan~\cite{CHAN1999189}) and uses the $O(n)$-time sequential
decision algorithm by Wang. But for the parallel algorithm, 
we use a different approach of running a decision algorithm in $O(\log n)$ 
parallel steps using $O(n)$ processors after $O(n \log n)$-time preprocessing.

%Our algorithm partitions $P$ into two subsets % $S^+$ and $S^-$, 
%one lying above the $x$-axis and one lying below the $x$-axis, in a coordinate
%system sorted around $o$ in counterclockwise order.
% , we can give them
% indices in the order.
By following Wang's approach, our algorithm divides the points into a certain number of groups,
computes for each group the common intersection of the disks centered at 
some points in the group and applies binary search on the intervals of indices independently.
Wang used $O(\log n)$ processors to compute the intersections in $O(\log n \log \log n)$ time.
In our algorithm, we increase the number of processors to $O(\log^2 n)$ so that 
the time complexity to compute the intersections decreases to $O(\log n)$.

Another difference to the approach by Wang is that 
for each group, we additionally construct a data structure that 
determines the emptiness of intersections with respect to some common intersection.
%\JMremove{the emptiness of the common intersection of the disks.} \complain{???}
This can be done in $O(\log n)$ time using $O(\log^6 n)$ processors. 
We use this data structure in binary search steps. 
It takes $O(\log^3\log n)$ time to determine if the common intersection is empty or not,
which improves the $O(\log n)$-time result by Wang.

Thus our decision algorithm takes $O(\log n)$ time using $O(n)$ processors,
after $O(n\log n)$-time preprocessing. We apply Cole's parametric 
search technique~\cite{Cole-parametric} to compute $r^*$ in $O((T_S+Q)(T_P+ \log Q))$ time, 
where $T_S=O(n)$ is sequential decision time, 
$T_P=O(\log n)$ is parallel decision time and $Q=O(n)$ 
is number of processor for parallel decision algorithm. 
Thus the optimal radius $r^*$ for
the restricted 2-center can be computed in $O(n\log n)$ time.
% From this, our algorithm computes an optimal solution for
% the restricted 2-center in $O(n\log n)$ time.} \complain{More details on all these complexities?} 

We also consider the 2-center problem for the special case that the input points 
are in convex position. Kim and Shin~\cite{kim2000} considered a variant 
of this problem for convex polygons:
Given a convex polygon, find two smallest congruent 
disks whose union contains the polygon.
They presented an $O(n\log^2 n)$-time algorithm for a convex $n$-gon
and 
% that finds two congruent smallest disks
% covering a convex $n$-gon. They 
claimed that the 2-center problem for $n$ points in
convex position can be solved similarly in the same time. But Tan~\cite{Tan2017}
pointed out a mistake in the time analysis, and presented an $O(n\log^2 n)$-time algorithm for the problem. 
But Wang~\cite{WANG2020} showed a counterexample to the algorithm by Tan, 
and gave an $O(n\log n\log\log n)$-time algorithm for this problem~\cite{WANG2020}.

We present a deterministic $O(n\log n)$-time algorithm 
that computes the optimal 2-center for the points in convex position. 
We first spend $O(n\log n)$ time 
to find a point contained in the overlap of the two optimal disks 
and a line separating the input points into two subsets. %$S^+$ and $S^-$.
Then we apply our algorithm
for the restricted 2-center problem to find an optimal pair of disks
covering the input points.

\section{Preliminaries}
For a point set $X$ in the plane, we denote by $I_r(X)$ the common
intersection of the disks with radius $r$, one centered at each point in
$X$. %  \complain{we need to define circular hull in this moment?} 
The circular hull $\alpha_r(X)$ of $X$ 
% (also known as the \emph{$\alpha$-hull} with $\alpha=1$) 
is the common intersection of all disks of radius $r$ containing $X$. 
See Figure~\ref{fig:intersection-chull} for an illustration.
\begin{figure}[tb]
	\centering
	\includegraphics[scale=0.9]{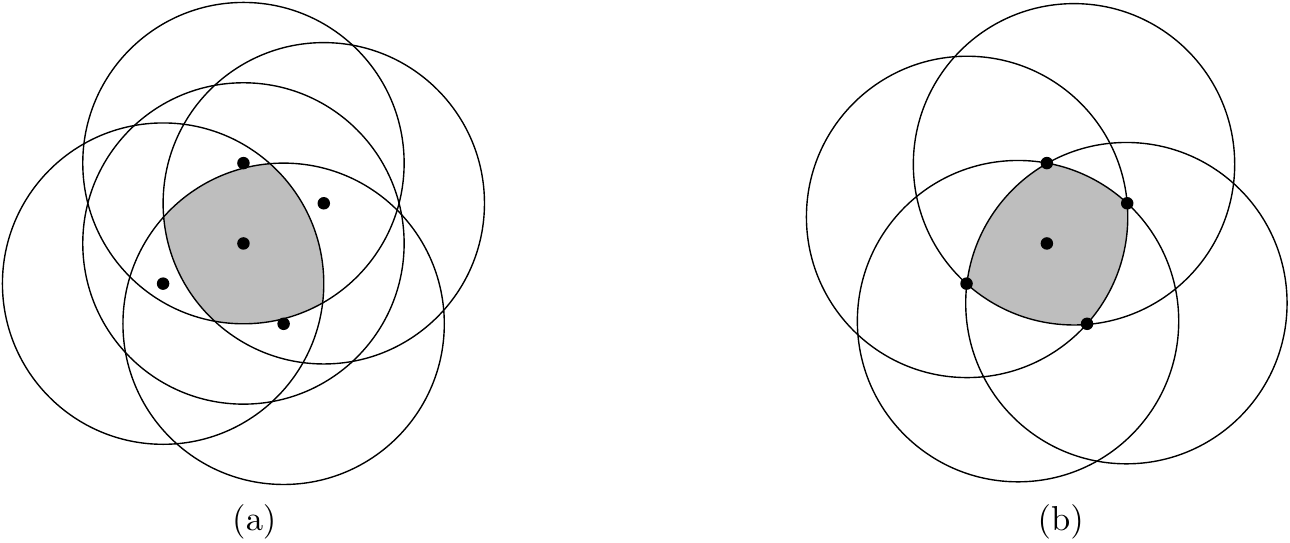}
	\caption{Five points (black dots) in $X$. (a) The intersection $I_r(X)$. (b) The circular hull $\alpha_r(X)$.}
	\label{fig:intersection-chull}
\end{figure}

Clearly, both
$I_r(X)$ and $\alpha_r(X)$ are convex.  Observe that $I_r(X)$ and
$\alpha_r(X)$ are dual to each other in the sense that every arc of
$I_r(X)$ is on the circle of radius $r$ centered at a vertex of
$\alpha_r(X)$ and every arc of $\alpha_r(X)$ is on the circle of
radius $r$ centered at a vertex of $I_r(X)$.  We may simply use $I(X)$
and $\alpha(X)$ instead of $I_r(X)$ and $\alpha_r(X)$ if they are
understood from the context. 

For a compact set $C$ in the plane, we use $\partial C$ to denote
the boundary of $C$. For any two points $p,q\in\partial C$, we use $C[p,q]$ 
to denote the part of the boundary
of $C$ from $p$ to $q$ in counterclockwise order.

Let $S$ be a set of $n$ points in the plane, and let $D^*_1$ and
$D^*_2$ be the two congruent disks of an optimal solution for the
restricted 2-center problem on $S$.
  We can find in $O(n)$ time
a constant number of points such that at least one of them lies in
$D^*_1\cap D^*_2$. We use $o$ to denote a point in $D^*_1\cap D^*_2$.

Without loss of generality, we assume that $o$ is at the origin of the
coordinate system. 
Let $S^+$ denote the set of points of $S$ that lie above the $x$-axis, and
$S^-$ denote $S\setminus S^+$. For ease of description, we assume that
both $S^+$ and $S^-$ have $n$ points. Let
% We assume that both $S^+$ and $S^-$ are not empty.
$s^+_1,s^+_2,\ldots, s^+_n$ be the points of $S^+$ sorted around
  $o$ in counterclockwise order and $s^-_1,s^-_2,\ldots, s^-_n$ be the
  points of $S^-$ sorted around $o$ in counterclockwise order such
  that they appear in order from $s^+_1$ to $s^+_n$ and then $s^-_1$ to
  $s^-_n$ around $o$ in counterclockwise order.  
    For any
  $1\leq i\leq j\leq n$, let $S^+[i,j]=\{s^+_i, s^+_{i+1},\ldots, s^+_j\}$ and
  $S^-[i,j]=\{s^-_i, s^-_{i+1},\ldots, s^-_j\}$.

% For ease of discussion, ...
There are two rays
$\mu_1, \mu_2$ emanating from $o$ that separate the points of $S$ into two
subsets, each covered by one optimal disk. See Figure~\ref{fig:restricted}.
Since such two rays
are always separated by the $x$-axis or the $y$-axis~\cite{CHAN1999189},
we simply assume that 
% We can find a  coordinate system with origin at $o$ such that 
  $\mu_1$ goes upward and $\mu_2$ goes downward. %\complain{time?}
  
For two indices $i,j\in[0,n]$, let $A[i,j]$ denote the radius of the smallest disk enclosing
$S^+[i+1,n] \cup S^-[1,j]$, and let $B[i,j]$ denote the radius of the smallest
disk enclosing $S^+[1,i] \cup S^-[j+1,n]$. For convenience, we let
$S^+[n+1,n]=S^+[1,0]=S^-[n+1,n]=S^-[1,0]=\emptyset$. 
% Note that $A$ and $B$ are defined for indices $i, j \in [0,n]$.  

For two points $p$ and $q$ in the plane, we use 
$\overrightarrow{pq}$ to denote a ray emanating from $p$ going towards $q$.

\section{Improved algorithm for the restricted 2-center problem}
\label{sec:ours}

Our algorithm follows the framework by Wang~\cite{WANG2020} (and thus
by Chan~\cite{CHAN1999189}), and uses the $O(n)$-time sequential
decision algorithm (after $O(n\log n)$-time preprocessing) by Wang.
% after $O(n\log n)$-time preprocessing as Wang's method in
% Section~\ref{sec:Wang.sequential}.
But for the parallel algorithm, 
we use a different approach of running a decision algorithm in $O(\log n)$ 
parallel steps using $O(n)$ processors after $O(n \log n)$-time preprocessing.
% For a parallel algorithm, we give a decision algorithm running in
% $O(\log n)$ parallel steps using $O(n)$ processors after $O(n \log n)$-time preprocessing.

Let $r_{ij} = \max (A[i,j],B[i,j])$ and
% where the minimum radius disk that each disk cover set
% $S^+[i+1,n] \cup S^-[1,j]$ and $S^+[1,i] \cup S^-[j+1,n]$.
$r^*_{i}= \min_{0 \leq j \leq n} r_{ij}$. 
Then
$r^*=\min_{0 \leq i \leq n} r^*_{i}$, where $r^*$ is 
the optimal radius for the restricted 2-center problem on $S$.
Let $R=(r_{ij})$ be the $n\times n$ matrix whose $(i,j)$-entry is $r_{ij}$. 
Let $j(i)=\arg_j\min r_{ij}$.
Observe that $R$ is \emph{monotone} by the definition of $r_{ij}$, that is, 
for any two indices $i_1> i_2$ we have $j(i_1)\geq j(i_2)$.

We search for $r^*$ among the entries in $R$. 
We can determine whether $r^*<r $ by checking if there is some $i$ with $r^*_i < r$.
Observe that for a fixed $i$, $A[i,j]$ increases and $B[i,j]$ decreases as $j$ increases. 
Therefore, $r_{ij}$ decreases and then increases as $j$ increases. 
By this property we can apply binary search in determining for a given $r$ if $r^*_i< r$ or not.
However, it takes too much time for our purpose to apply binary search directly 
on the entries of $R$.
Instead, we restrict binary search to certain elements of $R$ using the monotone property of $R$, 
and construct a few data structures to speed up the decision procedure as follows.

The algorithm consists of three phases.
%Preprocessing independent with $r$, Preprocessing with $r$ and Binary search. 
% In the preprocessing phase independent to $r$,
In Phase 1, our algorithm constructs a data structure on $S$ such that 
given a query consisting of $r>0$ and an interval $[i,j]$ it returns 
$I_r(S^+[i,j])$ or $I_r(S^-[i,j])$. % \complain{a generic structure working for any given $r$?}
It also reduces the search space % the candidates 
in $R$ by evaluating on $r_{ij}$ at every
$(\log^6 n)$-th $j$ values from $j=1$. This gives us a set of $O(n/\log^6 n)$
disjoint submatrices of $R$ with height $\log^6 n$. Then it divides 
each of these submatrices further such that its width is at most $\log^6 n$. 
So there are $O(n/\log^6 n)$ groups. See Figure~\ref{fig:monotone}.
% \complain{Is this what you intend?}
In Phase 2, given $r>0$, our algorithm constructs 
a data structure for each submatrix obtained from Phase 1 such that
given two indices $i$ and $j$ with $r_{ij}$ in the submatrix it determines whether 
$I_r(S^+[i,n])\cap I_r(S^-[1,j])=\emptyset$ or not. 
% that is, the emptiness of the common intersection of disks. 
In Phase 3, our algorithm applies binary search to find a radius smaller than 
$r$ among the elements in each submatrix using the intersection emptiness 
queries on the data structure
in Phase 2. 

\subsection{Phase 1: Preprocessing}
In Phase 1, we construct a data structure on $S$ and reduce the search space in $R$.

\subsubsection{Data structure}
In this section, $r$ is a certain radius such that $I_r(\cdot)$ and $I_{r^*}(\cdot)$ 
have the same combinatorial structure.
We build a balanced BST(binary search tree) on $S$ as follows.
Let $\bbst{S}$ denote a balanced BST on an ordered point set $S$.
Each leaf node corresponds to an ordered point in $S$ in order, from left to right. 
Each nonleaf node corresponds to the points of $S$ corresponded to by  
the leaf nodes of the subtree rooted at the node.
For a node $w$, let $S_w$ denote the set of points corresponding to $w$. 
See Figure~\ref{fig:canonical}(a).
% and canonical subset for ordered pointset $S$. 
% Each leaf node represents an ordered point in $S$ in order, from left to right. 
% An internal node $w$ represents the union of the points represented by
% its children nodes. 

Observe that any interval $[i,j]$ of points of $S$ can be presented by 
$O(\log n)$ nodes and they can be found in $O(\log n)$ time
as a range can be represented by $O(\log n)$ subtrees 
in the 1-dimensional range tree on $S$.
We call these nodes \emph{the canonical nodes} of the interval in $\bbst{S}$. 
For instance, the internal node corresponding to $\{p_1,p_2\}$ and 
the leaf node corresponding to $p_3$
are the canonical nodes of interval $[1,3]$ in Figure~\ref{fig:canonical}(a).
% \complain{Check!}
%We can create new data structure $\mathcal{C}(P)$ modified from $\mathcal{C}_1(P)$. 
At each nodes $w$ in $\bbst{S}$, we store $I_r(S_w)$ as an additional information.
% \complain{without knowing $r^*$ in advance?} 
% such that \ccheck{$I(S_w)$ and $I_{r^*}(S_w)$} have the same combinational structure. 

The boundary of $I_r(S_w)$ is represented by a balanced binary search tree
and stored at $w$.
The balanced binary search trees for the nodes in $\bbst{S}$ are 
computed in bottom-up manner in $O(n\log n)$ total time. 
See Figure~\ref{fig:canonical}.
% where $S_w$ is the set of points represented by $w$.
Thus, $\bbst{S}$ can be constructed in $O(n\log n)$ time.

\begin{figure}[ht]
	\centering
	\includegraphics[width=.9\textwidth]{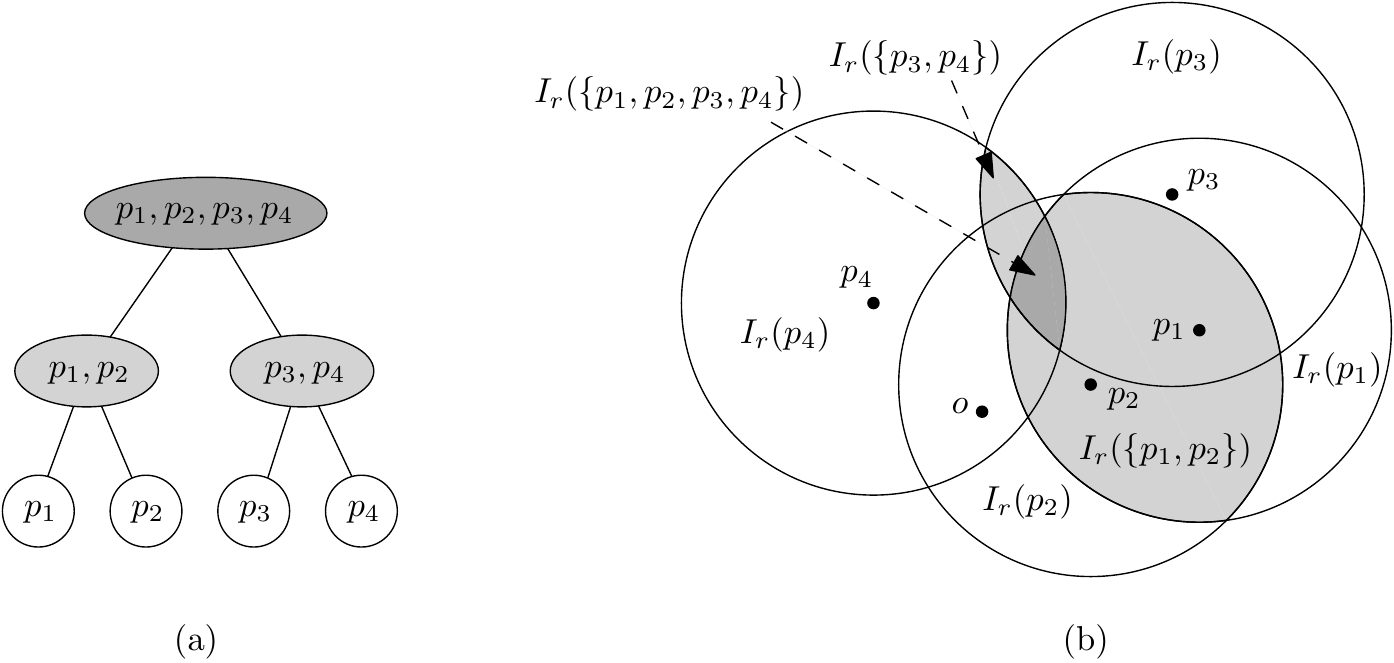}
	\caption{(a) The balanced binary search tree $\bbst{S}$ for $S=\{p_1,p_2,p_3,p_4\}$. (b) The boundary of $I_r(S_w)$ for a node $w$ in the tree in (a) is represented by 
	a balanced binary search tree and stored at $w$.}
	\label{fig:canonical}
\end{figure}

The following two technical lemmas can be shown by lemmas %~$6$ and Lemma~$5$ 
in Wang's paper~\cite{WANG2020} as $I_r(S_w)$ is the dual of the circular hull $\alpha_r(S_w)$.

\begin{lemma}[Lemma~$6$ in~\cite{WANG2020}]
\label{lem:TS}
$\bbst{S}$ can be constructed in $O(n\log n)$ time such that the combinatorial structure of $I_r(S_w)$ is same as $I_{r^*}(S_w)$ for any $w \in \bbst{S}$.
\end{lemma}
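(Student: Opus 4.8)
\medskip
\noindent\textbf{Proof plan.}
The plan is to obtain this from the circular-hull version established by Wang~\cite{WANG2020}, transferred through the duality between $I_r$ and $\alpha_r$ recalled in the Preliminaries. First I would argue that there is a single radius $r$ for which $\alpha_r(S_w)$ is combinatorially identical to $\alpha_{r^*}(S_w)$ — the same cyclic sequence of vertices, and hence of arcs — for \emph{every} node $w$ of $\bbst{S}$ (there are $O(n)$ of them). Then, because each arc of $I_r(X)$ lies on the circle of radius $r$ about a vertex of $\alpha_r(X)$ and each vertex of $I_r(X)$ is the center of an arc of $\alpha_r(X)$, and this correspondence is an order-preserving bijection of boundary features, the combinatorial type of $I_r(S_w)$ coincides with that of $I_{r^*}(S_w)$ for the same $r$. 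This is exactly Lemma~$6$ of~\cite{WANG2020} read through this duality, and I would cite it rather than reprove it.

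For the existence of such an $r$: for a fixed point set $X$ the combinatorial type of $\alpha_r(X)$ is piecewise constant in $r$, changing only at the finitely many critical radii at which a convex-hull vertex of $X$ enters or leaves $\partial\alpha_r(X)$. The critical radii, taken over all $O(n)$ nodes $w$, partition $(0,\infty)$ into cells on each of which the type of every $\alpha_r(S_w)$ is constant; generically $r^*$ lies in the interior of such a cell, and then any radius in that cell has, for every $w$, the same circular-hull type as $r^*$. Locating such a radius within the time budget is the real content: one starts from a constant-factor estimate of $r^*$ — obtainable in $O(n)$ time, for instance from $r^*\in[\rho/2,\rho]$ with $\rho=\max_{s\in S}|os|$ — and narrows the estimate using $O(\log n)$ calls to Wang's $O(n)$-time sequential decision algorithm, with the candidate critical radii organized so that the total cost stays $O(n\log n)$.

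Given such an $r$, the construction of $\bbst{S}$ is a routine bottom-up pass. Build the balanced BST on the already-sorted points in $O(n)$ time; store at each leaf the single disk $I_r(\{p\})$; and store at each internal node $w$ with children $w_1,w_2$ the region $I_r(S_w)=I_r(S_{w_1})\cap I_r(S_{w_2})$, obtained by merging the two sorted sequences of arcs in $O(|S_w|)$ time and kept as a balanced BST on its $O(|S_w|)$ arcs. Summing over the tree, $\sum_w O(|S_w|)=O(n\log n)$. The step I expect to be the main obstacle is the first one — proving that one radius $r$ works for all $O(n)$ nodes simultaneously and can be pinned down in $O(n\log n)$ time; the bottom-up construction and the duality bookkeeping are straightforward by comparison, and if one did not want to invoke Lemma~$6$ of~\cite{WANG2020} as a black box, this is the part that would have to be expanded, analysing how the circular-hull structure depends on $r$.
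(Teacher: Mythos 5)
Your proposal matches the paper's treatment: the paper gives no independent proof of this statement, but simply invokes Lemma~6 of Wang's paper together with the duality between $I_r(S_w)$ and the circular hull $\alpha_r(S_w)$, which is exactly your plan. The additional sketch you give (piecewise-constant combinatorial type in $r$, locating a suitable $r$ via the sequential decision algorithm, and the $O(n\log n)$ bottom-up merge) is consistent with how Wang establishes that lemma, so no gap to flag.
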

%\begin{proof}
%\ccheck{Dual of Wang's data structure.}
%\end{proof}

%Now we will give some technical lemmas for queries on $\bbst{S}$ and 
%operations used in the algorithm. 
%%\ccheck{In the algorithm, we need to determine if two common intersections of disks have
%%a nonempty intersection. This can be done by applying some operations
%%on the balanced binary search trees representing their boundaries.}
%% The basic operation is used for constructing data structure such that finding intersection of disk.} \complain{???} 
%The following lemma can be shown by using Lemma~$5$ in Wang's paper~\cite{WANG2020}
%as $I(S_w)$ is the dual of the circular hull of $S_w$.
\begin{lemma}[Lemma~$5$ in~\cite{WANG2020}]\label{lem:tangent}
  Let $L$ and $R$ be the point sets in the plane such that $L$ and $R$
  are separated by a line and the arcs of $I_r(L)$ and $I_r(R)$ are stored
  in a data structure supporting binary search.  One can
  do the following operation in $O(\log(|L|+|R|))$ time: determine
  whether $I_r(L) \cap I_r(R)=\emptyset$ or not; if
  $I_r(L) \cap I_r(R)\neq\emptyset$, either determine whether
    $I_r(L) \subseteq I_r(R)$ or $I_r(R) \subseteq I_r(L)$, 
  or find the two intersection points of $\partial I_r(L)$ and
  $\partial I_r(R)$.
\end{lemma}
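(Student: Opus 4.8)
The plan is to first pin down the structural fact underlying the statement---that $\partial I_r(L)$ and $\partial I_r(R)$ cross in at most two points when $L$ and $R$ are separated by a line---and then locate those crossings by a simultaneous binary search on the two (convex) boundary curves. Write $m=|L|+|R|$. For the structural fact, observe that $I_r(L)\cap I_r(R)=I_r(L\cup R)$; since $L$ and $R$ lie on opposite sides of the separating line, the vertices of $\conv(L\cup R)$ coming from $L$ occupy a contiguous run in cyclic order and likewise for those from $R$, and the same holds for the vertices of the circular hull $\alpha_r(L\cup R)$, which are points of $L\cup R$ in the same cyclic order. By the duality between $\alpha_r(\cdot)$ and $I_r(\cdot)$, the arcs of $\partial I_r(L\cup R)$, read counterclockwise, therefore split into one contiguous block of arcs supported by circles centered in $L$ and one contiguous block supported by circles centered in $R$; these two blocks meet at most two points. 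Finally, every point of $\partial I_r(L)\cap\partial I_r(R)$ lies on $\partial I_r(L\cup R)$---being in $I_r(L)\cap I_r(R)$ with a defining disk tight---so it is one of those two meeting points; hence the bound. Tangency or coinciding arcs are degeneracies handled separately.

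For the algorithm, recall that the arcs of $\partial I_r(L)$ and of $\partial I_r(R)$ are each stored, in angular order, in a structure supporting binary search, so the $k$-th arc of either boundary is accessible in $O(1)$ time after an $O(\log m)$-time index lookup, and a membership query ``$p\in I_r(L)$?'' is answered in $O(\log|L|)$ time by binary search on $\partial I_r(L)$ (symmetrically for $R$). To find the at most two crossings in $O(\log m)$ total time, I would run a simultaneous binary search on the two convex boundaries, in the spirit of the classical algorithms that intersect two convex polygons given sorted access to their vertices: maintain one contiguous candidate sub-arc on each boundary, and in each round use a suitable line---the chord through the two current probe points, or a tangent line at one of them---to certify that a constant fraction of one candidate sub-arc cannot contain a crossing and discard it. After $O(\log|L|+\log|R|)=O(\log m)$ rounds the candidate sub-arcs have shrunk to constant size and the crossings, if any, are read off directly; the same process also detects when there is no crossing.

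It remains to produce the required output. If two crossing points are found, report them. If none are found, then $I_r(L)$ and $I_r(R)$ are disjoint or one contains the other; take any vertex $v$ of $\partial I_r(L)$ and test $v\in I_r(R)$ in $O(\log|R|)$ time. If $v\in I_r(R)$, then---because there is no boundary crossing and $I_r(L)$ is connected and convex---$I_r(L)\subseteq I_r(R)$. Otherwise, test a vertex of $\partial I_r(R)$ for membership in $I_r(L)$: if it lies inside, $I_r(R)\subseteq I_r(L)$; if not, $I_r(L)\cap I_r(R)=\emptyset$. Each step costs $O(\log m)$, so the whole procedure meets the claimed bound.

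I expect the main obstacle to be the running time rather than the correctness: the straightforward approach of binary searching along one boundary while answering an $O(\log|R|)$-time membership query on the other at each probe costs $O(\log|L|\cdot\log|R|)$, a logarithmic factor too much. Bringing this down to $O(\log m)$ forces the two boundaries to be searched in tandem, shrinking both by a constant fraction per round, which is exactly where the convexity of $I_r(L)$ and $I_r(R)$ together with the at-most-two-crossings guarantee are needed; the degenerate configurations---an empty region, tangential contact, a shared boundary arc, or a crossing point coinciding with a vertex---must be routed through the same machinery without incurring an extra log factor.
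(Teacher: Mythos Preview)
The paper does not actually prove this lemma: it is quoted verbatim as Lemma~5 of Wang~\cite{WANG2020}, and the only justification offered is the one-line remark that it follows from Wang's circular-hull lemma by the duality between $I_r(\cdot)$ and $\alpha_r(\cdot)$. So there is no ``paper's own proof'' to compare against beyond that pointer.

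Your proposal supplies precisely the structural argument the paper gestures at---the at-most-two-crossings fact via duality with $\alpha_r(L\cup R)$ and the contiguity of $L$- and $R$-vertices forced by line separation---and that part is correct and matches the paper's implicit reasoning. You then go further than the paper by sketching the algorithm: a tandem binary search on the two convex arc-chains, shrinking both candidate sub-arcs by a constant fraction per round via a chord/tangent test, in the style of the classical logarithmic-time common-tangent and convex-intersection procedures. This is the standard technique for the problem and is what Wang's proof ultimately rests on; your identification of the main obstacle (avoiding the naive $O(\log|L|\cdot\log|R|)$ cost) and your handling of the containment/disjoint cases by a single membership query are both right. There is no genuine gap---your outline is sound and strictly more detailed than what the present paper provides---only the routine work of stating the per-round invariant that guarantees a constant fraction of one sub-arc can be discarded.
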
 
%\begin{proof}
%\ccheck{Dual of Wang's common tangent.}
%\end{proof}
 
\subsubsection{Speeding up intersection queries}
Given a query range $[i,j]$, we can compute
the common intersection $I(S^+[i,j])$ for points in $S^+[i,j]$
by computing the common intersection of $I(S_w)$'s 
for all canonical nodes $w$ of $S^+[i,j]$ in $\bbst{S^+}$
by Lemma~\ref{lem:tangent}, and by splitting and gluing 
the binary search trees stored in the canonical nodes. 
Wang showed a parallel algorithm for this process
using $O(\log n)$ processors and running in $O(\log n \log \log n)$ time.
Wang's algorithm handles gluing two binary trees at a processor.
% $I_1,\ldots,I_k$ each of which is the intersections of congruent disks.
% centered at points. in $S_w$ of a canonical node $w$.

We can improve the query time for our purpose by using
more processors, and by splitting and gluing 
multiple (more than two) binary trees simultaneously to compute
the boundary arcs of common intersections
using the order of the boundary arcs.
% Wang's algorithm handles gluing two binary trees at a processor.
Our algorithm computes the part of $\partial I_i$ that appears on the boundary
of the common intersection. % and combined the boundary.}
To compute the boundary part of $I_i$ efficiently, we represent the intersection
of two regions in a number of intervals on its boundary. 

\begin{figure}[ht]
	\centering
	\includegraphics[width=.6\textwidth]{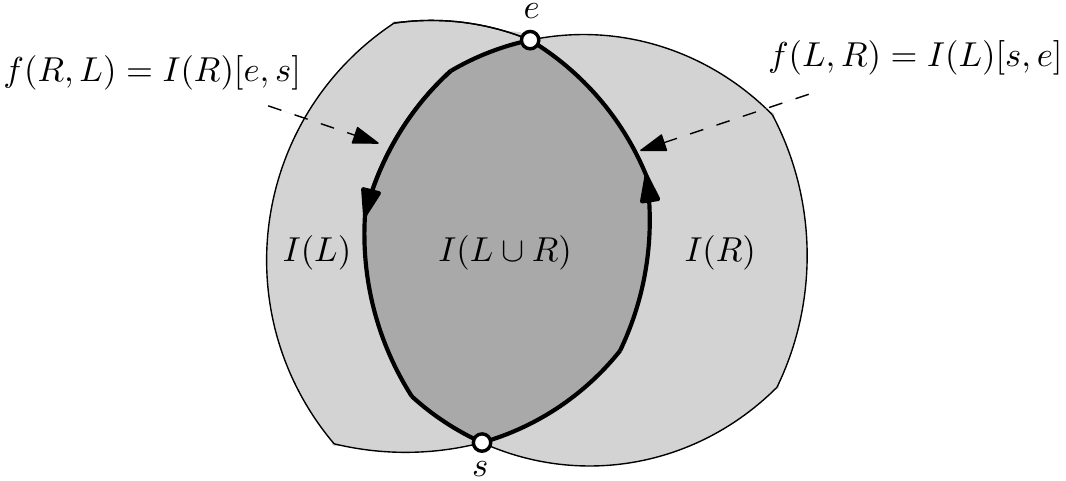}
	\caption{$I(L) \cap I(R)$ consists of two boundary parts, $f(L,R)$ and $f(R,L)$.}
	\label{fig:f_function}
\end{figure}

For a set of points in the plane, 
let $L$ and $R$ be the subsets of the points separated by a line. 
% partitions a set of points in the plane 
% into two subsets $L$ and $R$ such that the points of $L$
% lie in one side of the line and the points of $R$ lie on the other side in the plane.}
Then $\partial I(L)$ and $\partial I(R)$ intersect as most twice.
Thus, we can represent $I(L)\cap I(R)$ by two boundary parts, 
one from $\partial I(L)$ and one from $\partial I(R)$, as follows. 
Let $f(L,R)$ and $f(R,L)$ denote the parts of $\partial I(L)$ and $\partial I(R)$,
respectively, such that $f(L,R)$ and $f(R,L)$ together form the boundary of
$I(L)\cap I(R)$. We can represent $f(L,R)$ by the two intersection points 
$\partial I(L)\cap \partial I(R)$ and the counterclockwise direction along 
$\partial I(L)$.
For two intersection points $s,e$ of $\partial I(L)\cap \partial I(R)$, we use
$f(L,R)=I(L)[s,e]$ if $s$ appears before $e$ on $\partial I(L)\cap \partial I(R)$
along $\partial I(L)$ in counterclockwise direction. See Figure~\ref{fig:f_function}.

For a family $\mathcal{W}$ of point sets, let $I(\mathcal{W})=I(\bigcup_{W\in\mathcal{W}} W)$ 
for ease of use.
The boundary $\partial I(\mathcal{W})$ consists of  
$I(\mathcal{W}) \cap \partial I(W)$ for each subset $W \in \mathcal{W}$.
To compute $I(\mathcal{W}) \cap \partial I(W)$, we need following lemma.
We use $U^+(i,j)$ (and $U^-(i,j)$) to denote the set of the canonical nodes of 
range $[i,j]$ for $\bbst{S^+}$ (and $\bbst{S^-}$). 

\begin{lemma}\label{lem:intersection}
  Let $X$ be a fixed point set and let $\mathcal{W}$ be a family of
  point sets.  If we have $f(X,W)$ for every $W \in \mathcal{W}$ and
  a point $x \in I(X)$, there are at most $|\mathcal{W}|$ connected components of
  $I(\mathcal{W})\cap \partial I(X)$, each of which can be represented by a connected part of $\partial I(X)$,
  and we can compute them in $O(|\mathcal{W}| \log |\mathcal{W}|)$ time.
  If $X=S_w$ for a node $w \in U^+(i,j)$ and $\mathcal{W}=\{S_{w'} \mid w' \in U^+(i,j)\}$, 
  there are at most two such connected components and we can compute them 
  in $O(|\mathcal{W}|)$ time .
%  Let $X$ be a fixed point set and let $\mathcal{W}$ be a family of
%  point sets.  If we have $f(X,W)$ for every $W \in \mathcal{W}$ and
%  a point $x \in I(X)$, we can compute the connected components of
%  $I(\mathcal{W})\cap \partial I(X)$ in $O(|\mathcal{W}|)$ time, each
%  of which can be represented by a connected part 
%  of $\partial I(X)$.
%  There are at most three such connected parts of $\partial I(X)$.
%  There are at most two such intervals 
%  if there is a region $R$ bounded by two rays emanating from the origin going upward 
%  such that $X\subset R$ and $W\cap R=\emptyset$ for every $W\in \mathcal{W}\setminus\{X\}$.
\end{lemma}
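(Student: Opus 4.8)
The plan is to reduce the whole statement to the combinatorics of arcs on the single closed convex curve $\partial I(X)$. First I would use that $I(\mathcal{W})=I\!\left(\bigcup_{W\in\mathcal{W}}W\right)=\bigcap_{W\in\mathcal{W}}I(W)$, so that
\[
  I(\mathcal{W})\cap\partial I(X)=\bigcap_{W\in\mathcal{W}}\bigl(I(W)\cap\partial I(X)\bigr).
\]
For every $W\in\mathcal{W}$ the boundaries $\partial I(X)$ and $\partial I(W)$ cross at most twice — this is exactly the property that makes $f(X,W)$ well defined (Lemma~\ref{lem:tangent}) — so $I(W)\cap\partial I(X)$ is a single connected sub-arc of $\partial I(X)$, namely $f(X,W)$ itself, reading the degenerate cases as usual ($f(X,W)=\partial I(X)$ when $I(X)\subseteq I(W)$, and $f(X,W)=\emptyset$ when $I(X)\cap I(W)=\emptyset$ or $I(W)\subseteq I(X)$). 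Hence $I(\mathcal{W})\cap\partial I(X)=\bigcap_{W\in\mathcal{W}}f(X,W)$, and setting $g_W:=\partial I(X)\setminus f(X,W)$, an open arc, this equals $\partial I(X)\setminus\bigcup_{W\in\mathcal{W}}g_W$.

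The first claim then follows because a union of $m:=|\mathcal{W}|$ open arcs on a circle has at most $m$ connected components, so its complement has at most $m$, each being a connected sub-arc of $\partial I(X)$. For the $O(m\log m)$ bound: each $f(X,W)$ is given by its two endpoints on $\partial I(X)$; I would take the point $x$, which I may assume interior to $I(X)$ (if $I(X)$ is a point or a segment the statement is trivial), and compute in $O(1)$ time the polar angle around $x$ of each of the $\le 2m$ endpoints — this is their cyclic order along $\partial I(X)$ since $I(X)$ is convex and $x$ is interior. Sorting these $O(m)$ angles costs $O(m\log m)$, and a single sweep around the circle maintains a counter of how many arcs $g_W$ cover the current position (incremented or decremented at the appropriate endpoint of each $f(X,W)$, whose role is read off from the orientation in its representation); the maximal runs on which the counter is $0$ are exactly the components of $\bigcap_W f(X,W)$, extracted in $O(m)$.

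For the special case the key structural observation is that the canonical nodes in $U^+(i,j)$ have pairwise disjoint index ranges that together partition $[i,j]$. If $[a,b]$ is the range of $w$, the other canonical nodes split into those with range inside $[i,a-1]$ and those with range inside $[b+1,j]$; since $I(\cdot)$ turns unions into intersections, the point sets of the former union to $L:=S^{+}[i,a-1]$ and those of the latter to $R:=S^{+}[b+1,j]$. All points of $S^{+}$ lie above the $x$-axis and are sorted angularly around $o$, so a line through $o$ separates $S_w$ from $L$, and another separates $S_w$ from $R$; hence $\partial I(S_w)$ crosses $\partial I(L)$ and $\partial I(R)$ at most twice, and $f(X,L)$ and $f(X,R)$ are single arcs. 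As $I(\mathcal{W})=I(L)\cap I(S_w)\cap I(R)$ and $\partial I(X)=\partial I(S_w)\subseteq I(S_w)$, this gives
\[
  I(\mathcal{W})\cap\partial I(X)=f(X,L)\cap f(X,R),
\]
the intersection of two arcs of a circle, which has at most two connected components. To obtain this in $O(m)$ time I would compute $f(X,L)$ incrementally: process the left canonical nodes in the order of their ranges, so that after the first $\ell$ of them the partial intersection is $f(X,S^{+}[i,c_\ell])$ for the running right endpoint $c_\ell$; by the same line-separation argument each partial result is a single arc, so each step is an intersection of two arcs decided in $O(1)$ using $x$ for cyclic comparisons, and no component ever splits. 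The analogous pass gives $f(X,R)$, and $f(X,L)\cap f(X,R)$ takes $O(1)$; the total is $O(m)$.

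I expect the main obstacle to be exactly this last structural point — recognizing that the left and right canonical nodes contribute nothing more than the prefix $L$ and the suffix $R$, so that the answer collapses to $f(X,L)\cap f(X,R)$ — and checking that the incremental intersection never produces more than one arc before the final step. Everything else is routine arc bookkeeping; the degenerate readings of $f$ (empty, or the full curve, and the cases $L=\emptyset$ or $R=\emptyset$) only need to be tracked so that the counter sweep and the incremental intersection behave correctly.
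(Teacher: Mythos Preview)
Your proposal is correct and follows essentially the same route as the paper: parametrize $\partial I(X)$ by polar angle from $x$, obtain the general bound by sorting the $O(|\mathcal{W}|)$ endpoints and sweeping, and for the special case split the canonical nodes of $U^{+}(i,j)$ around $w$ into a left block $\mathcal{W}_L$ and a right block $\mathcal{W}_R$, each separated from $S_w$ by a line through $o$, so that each side contributes a single arc and the final intersection has at most two components. The only cosmetic difference is that the paper computes the one-arc contribution of each side by fixing one interval in $A_L$ (resp.\ $A_R$) and intersecting all others within it, whereas you do the equivalent incremental pass; both are $O(|\mathcal{W}|)$ for the same reason (any partial intersection on one side is a single arc by line separation).
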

\begin{proof}

  For a set $W \in
  \mathcal{W}$, consider $f(X,W)=I(X)[s,e]$.  For a point $x\in I(X)$, 
  $I(X)[s,e]$ can be considered as an angle interval $[\theta_{s},\theta_{e}]$ 
  with respect to $x$, where $\theta_s$ and
  $\theta_e$ are the angles of $\overrightarrow{xs}$ and $\overrightarrow{xe}$. 
  (Note that the endpoints of $I(X)[s,e]$ can be represented as algebraic functions
  of $r$ with constant degree for $r$. 
%  such that $I_r(\cdot)$ and $I_{r^*}(\cdot)$ have the same combinatorial structure. 
  For ease of description, we simply use the angles instead.)
  
%   \ccheck{$I(\mathcal{W})\cap \partial I(X) = \cap_{W \in \mathcal{W}} I(W) \cap \partial I(X) = \cap_{W \in \mathcal{W}} (I(W) \cap \partial I(X))= \cap_{W \in \mathcal{W}} f(X,W)$}

%To compute $I(\mathcal{W})\cap \partial I(X)$, which is $\bigcap_{W \in \mathcal{W}} f(X,W)$,

	By sorting the endpoints of the angle intervals, we can compute 
	$\bigcap_{W \in \mathcal{W}} f(X,W)$, which is $I(\mathcal{W})\cap \partial I(X)$.    
  However, there are only two such angle intervals if $X=S_w$ for a node 
  $w \in U^+(i,j)$ and $\mathcal{W}=\{S_{w'} \mid w' \in U^+(i,j)\}$, 
  and thus we can compute $I(\mathcal{W})\cap \partial I(X)$ in $O(|\mathcal{W}|)$ time
  as follows.
  Let $S_w=S^+[i',j']$, then $\mathcal{W}$ can be partitioned into three subfamilies 
  $\{S_w\}, \mathcal{W}_L=U^+(j'+1,j)$, and $\mathcal{W}_R=U^+(i,i'-1)$. Observe 
  that the subfamilies are separated by the two lines, both through the origin, 
  one through $p_{j'}$ and one through $p_{i'}$.
  Let $A_L$ be the set of angle intervals from $f(X,W_L)$ for $W_L \in \mathcal{W}_L$ and 
  let $I_L$ be the common intersection of the angle intervals in $A_L$. 
  The common intersection of the angle intervals
  of any subset of $A_L$ consists of at most one angle interval by Lemma~\ref{lem:tangent}, because 
  any set of $\mathcal{W}_L$ is separated from $S_w$ by a line. 
  Consider an angle interval $[\theta_s,\theta_e]$ of $A_L$. Then the intersection of 
  $[\theta_s,\theta_e]$ with every other angle interval in $A_L$ is connected. 
  Thus, we compute the common intersection $I_L$ of
  the angle intervals of $A_L$ within $[\theta_s,\theta_e]$ in $O(|A_L|)$ time.
For the set of angle intervals from $f(X,W_R)$ for $W_R \in \mathcal{W}_R$,
we can compute the common intersection of those intervals in a similar way.
  Thus there are at most two such connected components and we can compute 
  them in $O(|\mathcal{W}|)$ time.
\end{proof}

In the following, we abuse $I(i,j)$ to denote $I(S^+[i,j])$ for any two indices $i,j$ satisfying 
$1\leq i\leq j\leq n$.
To merge the boundaries we need to know their appearing order in boundary of common intersection.

\begin{lemma}\label{lem:boundary}
The indices of the points corresponding to the arcs of $\partial I(i,j)$ are increasing and decreasing consecutively 
at most once while traversing along $\partial I(i,j)$.
\end{lemma}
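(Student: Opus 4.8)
The plan is to translate the statement about arcs of $\partial I(i,j)$ into one about the boundary order of the convex-hull vertices of a radially sorted point set, using the duality between $I_r(\cdot)$ and $\alpha_r(\cdot)$ recalled in the Preliminaries.

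First I would use the fact that every arc of $\partial I(i,j)=\partial I_r(S^+[i,j])$ lies on the circle of radius $r$ centered at a vertex of the circular hull $\alpha_r(S^+[i,j])$, with distinct arcs corresponding to distinct vertices; moreover, since both $I(i,j)$ and $\alpha_r(S^+[i,j])$ are convex, traversing $\partial I(i,j)$ once counterclockwise visits these arcs in the same cyclic order in which the corresponding vertices are met along $\partial\alpha_r(S^+[i,j])$. Every vertex of $\alpha_r(S^+[i,j])$ is also a vertex of $\conv(S^+[i,j])$, and since $\alpha_r(S^+[i,j])\subseteq\conv(S^+[i,j])$ with both convex, these vertices occur along $\partial\alpha_r(S^+[i,j])$ in the same cyclic order as along $\partial\conv(S^+[i,j])$. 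Because any cyclic subsequence of a cyclic sequence that increases and then decreases at most once again has this property, it suffices to prove the following: reading the vertices of $\conv(S^+[i,j])$ along its boundary, their indices increase and then decrease consecutively at most once.

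Next I would prove this convex-hull statement. All points of $S^+$ lie strictly above the $x$-axis, so $o$ lies strictly below, hence outside, $\conv(S^+[i,j])$. The two tangent lines from $o$ to $\conv(S^+[i,j])$ touch it at the points of $S^+[i,j]$ of smallest and largest angle around $o$, namely $s^+_i$ and $s^+_j$; in particular both are hull vertices, and they split $\partial\conv(S^+[i,j])$ into the chain visible from $o$ and the chain hidden from $o$. Rotating a ray from $o$ counterclockwise from $\overrightarrow{o s^+_i}$ to $\overrightarrow{o s^+_j}$, the near and the far intersection points of the ray with $\conv(S^+[i,j])$ move monotonically along the visible chain and the hidden chain, respectively, so that reading either chain from $s^+_i$ to $s^+_j$ lists its hull vertices in strictly increasing index order. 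Traversing $\partial\conv(S^+[i,j])$ counterclockwise, one of the two chains is traversed from $s^+_i$ to $s^+_j$ and the other from $s^+_j$ to $s^+_i$; hence, starting the traversal at $s^+_i$, the index sequence first increases up to $j$ and then decreases back to $i$, which is exactly the claim.

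The only genuinely delicate point I foresee is the monotonicity of the sweep — that as the ray from $o$ turns counterclockwise, its two intersection points with $\conv(S^+[i,j])$ each move monotonically along their chains, so that vertices of both chains are encountered in angular order around $o$. This is intuitively clear and follows from the convexity of $\conv(S^+[i,j])$ (each line through $o$ meets it in a single segment), but it is what makes the visible/hidden chain decomposition valid; everything else — identifying $s^+_i,s^+_j$ as the tangent vertices, passing from the circular hull to the convex hull, and the stability of the ``increase then decrease'' property under cyclic subsequences — is routine.
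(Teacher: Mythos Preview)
Your proof is correct and follows the same core idea as the paper --- reduce via the $I_r\leftrightarrow\alpha_r$ duality to a statement about the cyclic order of vertices on a convex boundary, then argue by sweeping a ray from $o$ and using that angular order about $o$ coincides with index order on $S^+$. The one genuine difference is that you insert an extra reduction from $\alpha_r(S^+[i,j])$ to $\conv(S^+[i,j])$, using that the vertices of the circular hull form a cyclic subsequence of those of the convex hull, and that cyclic subsequences preserve the ``increase then decrease at most once'' property. This buys you a clean single-case argument: since every point of $S^+$ lies strictly above the $x$-axis, $o$ is always outside $\conv(S^+[i,j])$, so the two tangents from $o$ exist and the visible/hidden-chain decomposition goes through directly. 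The paper instead works with $\alpha_r(S^+[i,j])$ throughout and must handle a second case, $o\in\alpha_r(S^+[i,j])$, which can occur because the circular hull may extend below the $x$-axis even though its vertices do not; in that case the paper argues separately that the vertex sequence is monotone from the smallest-index to the largest-index vertex. Your extra reduction is a modest but real simplification.

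One small remark: your assertion that the tangents from $o$ touch precisely at $s^+_i$ and $s^+_j$, and that both are hull vertices, tacitly assumes no two points of $S^+[i,j]$ are collinear with $o$. The paper is equally informal about such degeneracies, and your argument goes through with the obvious adjustment (replace $s^+_i,s^+_j$ by the actual tangent vertices, which still have extremal angles), so this is not a gap.
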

\begin{proof} 
We use circular hulls $\alpha_r(\cdot)$ to prove the lemma.
Since $I_r(\cdot)$ and $\alpha_r(\cdot)$ are dual to each other,
the proof also holds for $I_r(\cdot)$.
We simply use $\alpha(\cdot)$ to denote $\alpha_r(\cdot)$. 
% such that
% $\alpha_r(\cdot)$ and $\alpha_{r^*}(\cdot)$ have the same combinatorial structure.
For any two indices $i$ and $j$, we show that 
the indices of the vertices that appear on the boundary of $\alpha(S^+[i,j])$ are increasing and decreasing consecutively at most once.

Consider the case that $\alpha(S^+[i,j])$ does not contain $o$.
Let $x_1$ and $x_2$ be the contact points 
of the right and left tangent lines to $\alpha(S^+[i,j])$ from $o$. 
For a point $p$ on the boundary of $\alpha(S^+[i,j])$, the angle between $\overrightarrow{op}$ and the $x$-axis 
is increasing while moving along the boundary of $\alpha(S^+[i,j])$ from $x_1$ to $x_2$ in counterclockwise order. 
Similarly, the angle between $\overrightarrow{op}$ and the $x$-axis is decreasing while moving 
along the boundary of $\alpha(S^+[i,j])$ from $x_2$ to $x_1$ in counterclockwise order.
Thus, the indices of the vertices of the circular hull are increasing and decreasing at most once. 

Now consider the case that $\alpha(S^+[i,j])$ contains $o$.
By the definitions of $o$ and $S^+$, there is no vertex of $\alpha(S^+[i,j])$ lying below the $x$-axis. 
Thus, for a vertex $v$ of $\alpha(S^+[i,j])$, the angle between $\overrightarrow{ov}$ and the $x$-axis is 
increasing while moving along the boundary of $\alpha(S^+[i,j])$ from $p_s$ to $p_t$ in counterclockwise, 
where $s$ and $t$ are the smallest and largest indices of the vertices of $\alpha(S^+[i,j])$. 

Since the order of the vertices on the boundary $\alpha(S^+[i,j])$ is the same as the order of the indices of points 
corresponding to the arcs of $\partial I(i,j)$, the lemma holds.
\end{proof}

We construct $\bbst{S^+}$ (and $\bbst{S^-}$) in $O(n\log n)$ time using Lemma~\ref{lem:TS}.
Now, we have basic lemmas on queries to $\bbst{S^+}$  (and $\bbst{S^-}$).  
% \ccheck{We call the sets of points represented by the nodes \emph{the canonical subsets} $U(i,j)$.} \complain{Check!}
% See figure~\ref{fig:canonical}.
Let $(r_1,r_2]$ be the range of radius $r$ for which $I_r(S_w)$ and $I_{r^*}(S_w)$ have the same
combinatorial structure for any $w\in\bbst{S^+}$.
\begin{lemma}\label{lem:query}
  Once $\bbst{S^+}$ is constructed, 
  we can process the following queries in $O(\log n)$ time 
  using $O(\log^2 n)$
  processors: given $r \in (r_1,r_2]$ and any pair $(i,j)$ of indices, 
  determine whether $I(i,j)=\emptyset$ or not, and if
  $I(i,j)\neq\emptyset$, return the root of a balanced binary search
  tree representing $I(i,j)$. 
\end{lemma}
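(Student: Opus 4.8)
The plan is to reduce the query on $[i,j]$ to its $O(\log n)$ canonical nodes and to compute the common intersection of the regions stored there in one round of ``all-pairs'' tangent computations, rather than by an iterated, logarithmic-depth sequence of pairwise merges. Given $(i,j)$, I would first compute the canonical node set $U^+(i,j)=\{w_1,\dots,w_m\}$ with $m=O(\log n)$ in $O(\log n)$ time, so that $I(i,j)=\bigcap_{k=1}^{m} I(S_{w_k})$ and each $I(S_{w_k})$ is already available as the balanced BST stored at $w_k$ in $\bbst{S^+}$; for any query radius $r\in(r_1,r_2]$ this stored combinatorial structure is valid (Lemma~\ref{lem:TS}), and every arc endpoint, tangent from $o$, and circle intersection is evaluated from $r$ in $O(1)$ time (or carried as a constant-degree function of $r$ for parametric search). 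If some stored region is flagged empty, or if Lemma~\ref{lem:tangent} finds a pair with $I(S_{w_k})\cap I(S_{w_l})=\emptyset$, then $I(i,j)=\emptyset$ and we are done. The point of the speedup is that, instead of merging the $m$ trees up a balanced merge tree of depth $O(\log m)$ --- which is where Wang's algorithm spends $O(\log n\log\log n)$ --- I assign one processor to each of the $O(m^2)=O(\log^2 n)$ ordered pairs $(w_k,w_l)$ and apply Lemma~\ref{lem:tangent} to each (the two nodes are separated by a line through $o$, since they occupy disjoint angular intervals of the upper half-plane), finishing all of them in $O(\log n)$ time and obtaining $f(S_{w_k},S_{w_l})$ whenever the two boundaries cross.

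The all-pairs information also lets me discard redundant nodes: if Lemma~\ref{lem:tangent} reports $I(S_{w_l})\subseteq I(S_{w_k})$ then $w_k$ does not affect the intersection and I drop it (consistent since $\subseteq$ is transitive), which is an $\mathrm{OR}$-reduction over the $O(m^2)$ pairwise flags costing $O(\log\log n)$ time. The surviving regions are pairwise crossing, and for each surviving $w_k$ I have $f(S_{w_k},S_{w_l})$ for every surviving $l$ together with a point $x_k\in I(S_{w_k})$ (any vertex of its stored boundary); moreover the surviving family still decomposes, relative to $w_k$, into the two angularly separated subfamilies used in the special case of Lemma~\ref{lem:intersection}. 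Hence, running Lemma~\ref{lem:intersection} for $X=S_{w_k}$ against $\mathcal{W}=\{S_{w_l}\}$ --- one processor per surviving node --- I obtain in $O(m)=O(\log n)$ time, using $O(\log n)$ processors, the at most two connected arcs of $\partial I(S_{w_k})$ that lie on $\partial I(i,j)$. If all of these contributions are empty, then $I(i,j)=\emptyset$, because otherwise $\partial I(i,j)=\bigcup_k\bigl(\partial I(S_{w_k})\cap I(i,j)\bigr)$ would be nonempty; otherwise the at most $2m$ nonempty arcs exactly tile $\partial I(i,j)$.

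It remains to glue these $\le 2m$ arcs into one search tree of height $O(\log n)$. By Lemma~\ref{lem:boundary} the center indices along $\partial I(i,j)$ are cyclically unimodal, so the contributions split into an increasing-index run and a decreasing-index run, each already ordered by the index of its originating canonical node; matching the shared breakpoints then yields the cyclic order of all $\le 2m$ arcs, which I would extract in $O(\log\log n)$ time with $O(\log^2 n)$ processors (e.g.\ by sorting the $O(\log n)$ breakpoints, or by list-ranking the cycle of pieces). Each arc is a connected part $I(S_{w_k})[s,e]$ of a stored boundary, obtained by $O(1)$ split operations on that tree; since the stored trees must survive later queries I would perform these splits persistently, copying the $O(\log n)$ touched nodes, so that extracting all $\le 2m$ pieces costs $O(\log n)$ time with $O(\log n)$ processors. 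Finally I build a skeleton balanced binary tree over the $\le 2m$ pieces in cyclic order, with internal nodes holding the breakpoints as routers; this skeleton has height $O(\log m)=O(\log\log n)$, and hanging each piece (itself of height $O(\log n)$) beneath it produces a balanced search tree for $I(i,j)$ of height $O(\log\log n)+O(\log n)=O(\log n)$, as required.

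The main obstacle I expect is exactly this gluing step: assembling the $\le 2m$ pieces into one height-$O(\log n)$ search tree within an overall $O(\log n)$ time budget, while (i) not corrupting the trees stored in $\bbst{S^+}$, which forces persistent splits, and (ii) avoiding a logarithmic-depth chain of tree joins, which would put back the $\log\log n$ factor --- the skeleton-over-pieces construction is what circumvents (ii), exploiting $\log\log n+\log n=O(\log n)$. A secondary delicate point is the correctness of the emptiness test: one must check that $I(i,j)=\emptyset$ is always caught either by an empty stored region, a vanishing pairwise intersection, or all per-node contributions from Lemma~\ref{lem:intersection} coming up empty, and that a mixed set of empty and nonempty contributions cannot occur when $I(i,j)\ne\emptyset$, which holds because $\partial I(i,j)$ is the disjoint union of those contributions. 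The degenerate cases where $I(i,j)$ is a single point or a segment are handled separately and are routine.
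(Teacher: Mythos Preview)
Your proposal is correct and follows essentially the same approach as the paper: compute all pairwise $f(S_{w_k},S_{w_l})$ in one $O(\log n)$-time round with $O(\log^2 n)$ processors, apply Lemma~\ref{lem:intersection} once per canonical node to get at most two boundary arcs each, extract them by path copying, and assemble them using the unimodal order from Lemma~\ref{lem:boundary}. Your skeleton-over-pieces construction is a clean, explicit realization of the paper's terser claim that ``we merge the $O(\log n)$ binary search trees in $O(\log n)$ time,'' and your redundant-node discard step is harmless though not needed (Lemma~\ref{lem:intersection} already copes with $f(X,W)=\partial I(X)$).

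One small slip: your assertion that ``a mixed set of empty and nonempty contributions cannot occur when $I(i,j)\neq\emptyset$'' is false in general --- even after your discard step, a node $w_k$ may survive all pairwise containment tests yet have $I(i,j)$ lie strictly inside $I(S_{w_k})$, so its contribution to $\partial I(i,j)$ is empty. This does not hurt you: the emptiness test ``all contributions empty $\Rightarrow I(i,j)=\emptyset$'' is still valid, and in the gluing step you simply omit the empty pieces; the nonempty ones still tile $\partial I(i,j)$.
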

\begin{proof}
%  We will return a binary search tree of $\partial I[i,j]$ whose height
%  is $O(\log n)$.  
  Observe that $\partial I(i,j)$ consists of parts of
  $\partial I(S_w)$ for all canonical nodes $w \in U^+(i,j)$. For a fixed
  $w \in U^+(i,j)$, we compute $I(i,j) \cap \partial I(S_w)$ by
  taking $f(S_w,S_{w'})$ for every canonical node $w' \in U^+(i,j)$ and applying
  Lemma~\ref{lem:intersection}. 
  %We find them in $O(\log n)$ time.} 
  For a fixed $w\in U^+(i,j)$, we can compute $f(S_w,S_{w'})$ for every $w'\in U^+(i,j)$
  in $O(\log n)$ parallel steps using $O(\log n)$ processors. 
 % Since there is a region $R$ bounded by two rays emanating from the origin 
%  going upward such that $S_w\subset R$ and $S_{w'}\cap R=\emptyset$ for every $S_{w'}$ with $w'\in U^+(i,j)\setminus\{w\}$,
%  If $X=S_w$ for a node $w \in U^+(i,j)$ and $\mathcal{W}=\{S_{w'} \mid w' \in U^+(i,j)\}$, 
  By Lemma~\ref{lem:intersection},
  there are at most two boundary parts representing $I(i,j) \cap \partial I(S_w)$, and we can 
  find them in $O(\log n)$ time using one processor.
  
  For a part $I(S_w)[s,e]$, we construct a binary search tree for
  $I(S_w)[s,e]$ using $I(S_w)$ stored at $w \in \bbst{S^+}$
  and the path copying~\cite{drisco1989} in $O(\log n)$ time.
  Since there are $O(\log n)$ nodes in $U^+(i,j)$, we can find
  $O(\log n)$ binary search trees, at most two for each node,
  in $O(\log n)$ parallel steps using $O(\log^2 n)$ processors. Observe that the
  binary search trees for a point set $S_w$ represents $I(i,j) \cap \partial I(S_w)$.
%    and
%  the union of the $O(\log n)$ binary search trees represents $\partial I[i,j]$.

  Then we merge all these binary search trees % computed in the \ccheck{previous step (paragraph)}
  into a binary search tree representing $\partial I(i,j)$.  By
  Lemma~\ref{lem:boundary}, the indices of the points in $S^+[i,j]$ corresponding to the arcs of 
  $\partial I(i,j)$ are increasing and then decreasing consecutively at most once while traversing along $\partial I(i,j)$.
For any two distinct canonical nodes $w,w'$ of $U^+(i,j)$, we use $w<w'$ 
if the indices of the points in $S_w$ are smaller than the indices of the points in $S_{w'}$.
Since the canonical nodes of $U^+(i,j)$ can be ordered by their corresponding point sets, 
the binary trees can also be ordered accordingly, 
by following the orders of their corresponding canonical nodes. %\complain{Pairs are ordered!}
Each canonical node $w$ has at most two parts of $I(i,j) \cap \partial I(S_w)$,
and the order between the binary search trees on the parts is decided by the order 
of their corresponding canonical nodes
in $\bbst{S^+}$.
%  \ccheck{the order of the binary search
%  trees along $\partial I[i,j]$ is known,
%  the indices increase and decrease at most once, and each index
%  has only two intervals.} \complain{???}  

Following this order, we merge the $O(\log n)$ binary search trees 
in $O(\log n)$ time. 
Recall that there are at most two binary search trees $T, T'$ for each node $U^+(i,j)$,
which are at the same position in the order. Thus, 
the merge process is done in two passes, one in the order and one in
reverse of the order, and $T$ is merged as a boundary part of $I(i,j)$ in one pass 
and $T'$ is merged as a boundary part of $I(i,j)$ the other pass.
% \ccheck{While the indices consisting of $\partial I(i,j)$ are increasing, the boundary parts
% of $\partial I(i,j)$ are from the binary search trees, possibly skipping some of them, in order.
% This can be be done by comparing the endpoints of the boundary parts represented by
% the trees.}
% \complain{Details on skipping?}
% We have at most two such candidate trees at a time. 
By repeating this process on the binary search trees in order, 
we can construct $\partial I(i,j)$. Finally we
  can return the root of the merged binary search tree representing
  $I(i,j)$ in $O(\log n)$ time using $O(\log^2 n)$ processors.
\end{proof}

\subsubsection{Reducing the search space to submatrices}
Our algorithm reduces the set of candidates for $r^*$ from the elements 
in the $n\times n$ matrix $R=(r_{ij}$) to the elements in $O(n/\log^6 n)$ 
disjoint submatrices of size $\log^6 n\times \log^6 n$ in $R$. 
This is done by evaluating on $r_{ij}$ at every
$\log^6 n$-th $j$ values from $j=1$, which results in a set of $O(n/\log^6 n)$
disjoint submatrices of of $R$ with height $\log^6 n$.
Then we divide each of these submatrices further such that its width is at most $\log^6 n$. 
See Figure~\ref{fig:monotone}.
% So there are $O(n/\log^6 n)$ groups.}
% For the reducing candidates we do as follows.

Precisely, let $m=\lfloor n / \log^6 n \rfloor$, $j_t= t \times \lfloor n/m \rfloor$ for $t = 0,1, \ldots, m$. 
%\JMremove{Let $j_0=0$ and $j_m=n$.} 
For each $t \in [0,m]$, let $i_t$ be the largest index in $[0,n]$
satisfying $A[i_t,j_t] \geq B[i_t,j_t]$. Observe that
$i_0 \leq i_1\leq \ldots \leq i_m$. Each $i_t$ can be found in
$O(\log^7 n)$ time after $O(n\log n)$-time preprocessing.
Since there are $O(m)$ such $i_t$'s, we can compute them in 
$O(n\log n)$ time~\cite{CHAN1999189}. % \complain{$O(n\log^7 n)$ time???}

The algorithm determines whether $r_i^* \leq r$ or not, for all $i =0,1,\ldots, n$, as follows.
For each $i$, let $t$ be an index in $[0,m-1]$ satisfying $i_t < i \leq i_{t+1}$.  
If $A[i,j_t] > r$, then the algorithm returns $r^*_i> r$. Otherwise, 
it finds the largest index $j \in [j_t,j_{t+1}]$ satisfying $A[i,j] \leq r$, and  returns $r_i^* \leq r$ if and only if $B[i,j] \leq r$.  See Algorithm 2 of~\cite{WANG2020} and Theorem 4.2
of~\cite{CHAN1999189}.

Our algorithm divides the indices $i$ from 0 to $n$ into at most $2m$ groups.
For each $t=0,1,\ldots,m-1$, if $i_{t+1}-i_t \leq \log ^6 n$ , the algorithm forms a group of at most $\log^6 n$ indices. Otherwise it forms a group for every consecutive $\log^6 n$ indices up to $i_{t+1}$.  Then there are at most $2m$ groups. Each group $G=[a,b]$ is contained in one of $[i_t,i_{t+1}]$. We identify group $G$ as $G(a,b,t)$. See Figure~\ref{fig:monotone}. 
\begin{figure}[ht]
	\centering
	\includegraphics[width=\textwidth]{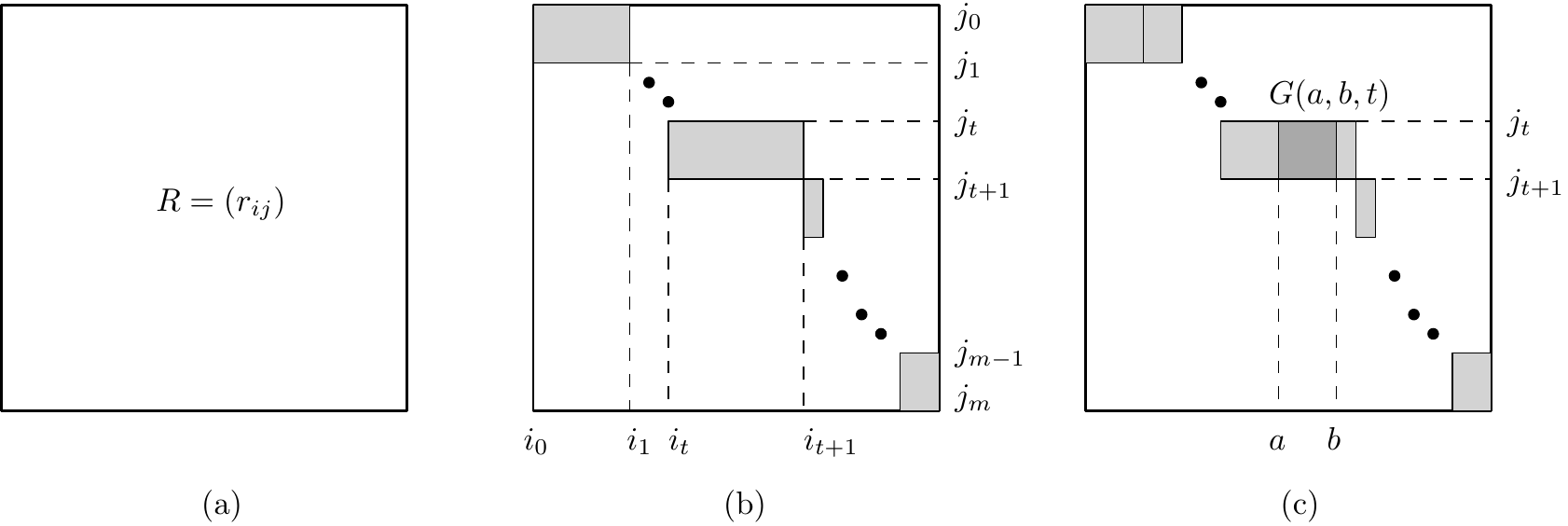}
	\caption{(a) Matrix $R=(r_{ij})$. (b) Reducing the search space into $O(n/\log^6 n)$ 
	submatrices of height $\log^6 n$ using the 
	monotone property, at every $(\log^6 n)$-th entries of index $j$. (c) Dividing each submatrix further
	into one with  width at most $\log^6 n$. A submatrix (a group of indices) $G(a,b,t)$ represents the 
	index ranges of $i$ from $a$ to $b$ and of $j$ from $j_t$ to $j_{t+1}$.}
	\label{fig:monotone}
\end{figure}

%At preprocessing phase, we construct $\bbst{S^+}$ and $\bbst{S^-}$, reduce candidates by finding $i_t$ and divides indices to $O(m)$ groups. 
For each group $G(a,b,t)$, we construct $\bbst{S^+[a,b]}$ and 
$\bbst{S^-[j_t,j_{t+1}]}$, which will be used to construct a data structure for determining 
the emptiness of intersections.
% This data structure will be used for constructing determining data structure.
This can be done for all groups in $O(n\log n)$ time in total.

\subsection{Phase 2 - Group information}
Given a value $r>0$, we determine if $A[i,j] \leq r$ for a group $G(a,b,t)$
with $a \leq i \leq b$ and $j_t \leq j \leq j_{t+1}$. % Consider we determine if $A[i,j] \leq r$ for a group $G(a,b,t)$
% with $a \leq i \leq b$ and $j_t \leq j \leq j_{t+1}$.
Observe that the points of $S^+[b,n]$ and the points of $S^-[1,j_t]$ are used
commonly in the process. See Figure~\ref{fig:dividingsets}.
\begin{figure}[ht]
	\centering
	\includegraphics[width=.35\textwidth]{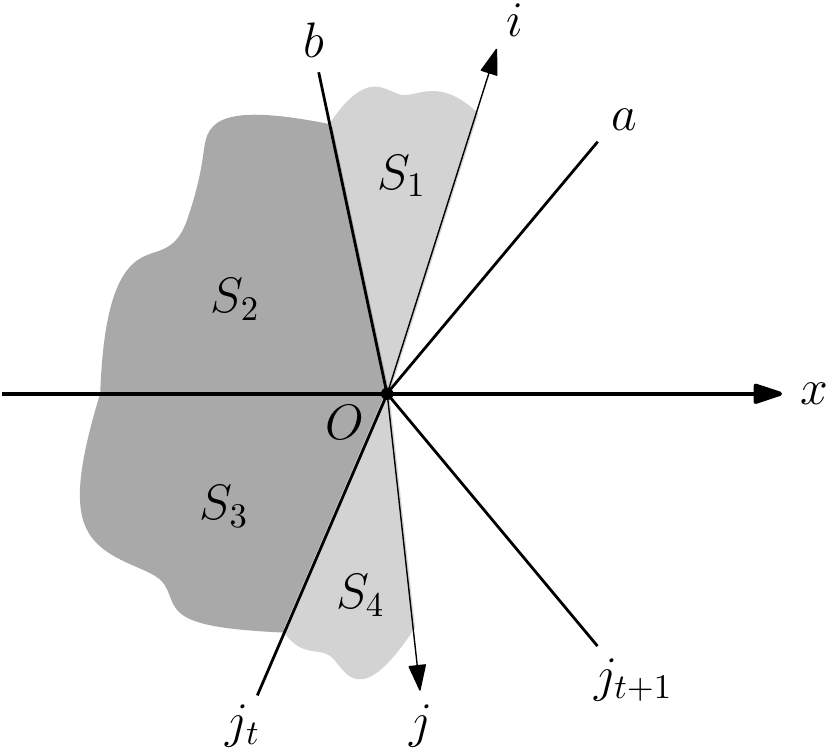}
	\caption{To determine if $A[i,j]\leq r$ for a $G(a,b,t)$ with $a \leq i \leq b$, $j_t \leq j \leq j_{t+1}$,
	we use  four subsets $S_1=S^+[i,b], S_2=S^+[b,n], S_3=S^-[1,j_t],$ and $S_4=S^-[j_t,j]$. 
	Observe that $b,j_t$ and $n$ are constant, and $|b-a|,|j_{t+1}-j_t|\leq \log^6 n$.}
	\label{fig:dividingsets}
\end{figure}

%\complain{change term of preprocessing} 
The decision algorithm consists of 
computing group information and binary search.  
We simply use $I(\cdot)$ to denote $I_r(\cdot)$.
For for each group $G(a,b,t)$, the algorithm computes 
 $I(S^+[b,n])$ and $I(S^-[1,j_t])$. And then it
constructs a data structure such that given query indices $i$ and $j$ with $a \leq i \leq b$ and $j_t \leq j \leq j_{t+1}$, it determines if $I(S^+[i,n]) \cap I(S^-[1,j]) =\emptyset$ or not
using $I(S^+[b,n])$ and $I(S^-[1,j_t])$. % \complain{$a$ is not used?}
In the binary search, we determine $A[i,j] \leq r$ or not (and $B[i,j] \leq r$ or not) using 
emptiness queries on this data structure. 

For an ordered point set $L$ and any point set $R$ 
separated by a line in the plane, 
let $\mathcal{D}(L,R)$ be a balanced BST on the ordered set $L$ with respect to $R$
constructed from $\bbst{L}$ such that for each node $w$ in $\bbst{L}$, 
we store $f(S_w,R)$ and $f(R,S_w)$ instead of $I(S_w)$.
%\complain{each node $w$ in $\mathcal{D}(L,R)$??} 
% We construct a data structure from 
% \ccheck{Let $\mathcal{D}(L,R)$ be modified data structure from $\bbst{L}$, where $L$ and $R$ are 
% line separated continuous ordered point set.} \complain{???}
%If we change $I(S_w)$ to $f(S_w,R)$ and $f(R,S_w)$ for each nodes $w$ in $\bbst{L}$, the data structure become $\mathcal{D}(L,R)$. 
This data structure will be used to determine if $I(L' \cup R)=\emptyset$ or not for 
any interval $L' \subseteq L$.

\begin{lemma}\label{lem:data3}
$\mathcal{D}(L,R)$ can be constructed in $O(\log (|L|+ |R|))$ time using $O(|L|)$ processors 
once we have $\bbst{L}$ and $I(R)$, where $L$ and $R$ are line separated continuous ordered point set.
\end{lemma}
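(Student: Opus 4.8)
The plan is to build $\mathcal{D}(L,R)$ in a single bottom-up sweep of the tree $\bbst{L}$, computing $f(S_w,R)$ and $f(R,S_w)$ at every node $w$ in parallel across the nodes on each level. The key observation is that, since $L$ and $R$ are separated by a line, for every node $w$ the boundaries $\partial I(S_w)$ and $\partial I(R)$ cross at most twice, so $f(S_w,R)$ and $f(R,S_w)$ are each a single connected boundary part; moreover $I(S_w)\subseteq I(S_{w'})$ whenever $w$ is a descendant of $w'$, because $S_{w'}\supseteq S_w$. So at a node $w$ with children $w_\ell$ and $w_r$ we already have the boundary representation of $I(S_w)$ available from $\bbst{L}$ (it was computed when $\bbst{L}$ was built), and what remains is to locate the two intersection points of $\partial I(S_w)$ with $\partial I(R)$ and to record which arc of $\partial I(S_w)$ lies inside $I(R)$ — that is precisely the data promised by Lemma~\ref{lem:tangent}.

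First I would assign one processor group to each node $w$ of $\bbst{L}$; since $\bbst{L}$ has $O(|L|)$ nodes this uses $O(|L|)$ processors. At node $w$, using the balanced BST for $\partial I(S_w)$ stored at $w$ and the balanced BST for $\partial I(R)$ (which we compute once, in $O(\log(|L|+|R|))$ time, from the points of $R$ — or receive as input), we invoke Lemma~\ref{lem:tangent} to decide in $O(\log(|L|+|R|))$ time whether $I(S_w)\cap I(R)=\emptyset$, whether one contains the other, or to obtain the two boundary crossing points $s_w,e_w$. From this we read off $f(S_w,R)=I(S_w)[s_w,e_w]$ and $f(R,S_w)=I(R)[e_w,s_w]$ (with the appropriate orientation, handling the containment and disjoint cases as degenerate endpoints), and store these at $w$ — each is represented succinctly by its two endpoints together with a pointer into the BST for $\partial I(S_w)$ (respectively $\partial I(R)$), using path copying so that the stored representation does not destroy the shared tree. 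All nodes do this simultaneously, so the whole construction is one parallel round of $O(\log(|L|+|R|))$ time on $O(|L|)$ processors; total work is $O(|L|\log(|L|+|R|))$, which is fine since we are measuring parallel time.

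The main obstacle — and the place where care is needed — is making the per-node work genuinely independent, so that the $O(|L|)$ processors really do run in $O(\log(|L|+|R|))$ parallel time rather than incurring a $\log$ factor from a bottom-up dependency. This is where the separating-line hypothesis does the heavy lifting: because $I(S_w)$ depends only on the point set $S_w$ (not on anything computed at its children), each $f(S_w,R)$ can be computed from $\partial I(S_w)$ and $\partial I(R)$ alone, with no inter-node communication, so there is no need to propagate merged structures up the tree — $\bbst{L}$ with its stored $I(S_w)$'s already encodes everything. A secondary technical point is the persistence of the shared BSTs: the arc $f(S_w,R)$ is a contiguous sub-arc of $\partial I(S_w)$, and extracting it as a standalone searchable structure (as needed later for the emptiness queries) is exactly the split operation used in the proof of Lemma~\ref{lem:query}, done here in $O(\log|S_w|)=O(\log|L|)$ time at each node via path copying. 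Assembling these pieces gives the claimed bound.
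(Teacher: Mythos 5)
Your proof is correct and takes essentially the same approach as the paper: assign one processor to each node $w$ of $\bbst{L}$ and, independently at each node, apply Lemma~\ref{lem:tangent} to the stored $I(S_w)$ and the given $I(R)$ to obtain $f(S_w,R)$ and $f(R,S_w)$, yielding $O(\log(|L|+|R|))$ parallel time on $O(|L|)$ processors with no bottom-up dependency. (One minor slip in an unused aside: for $w$ a descendant of $w'$ we have $S_w\subseteq S_{w'}$ and hence $I(S_{w'})\subseteq I(S_w)$, the reverse of the inclusion you wrote.)
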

\begin{proof}
First we assign a processor to a node $w \in \bbst{L}$. 
% $\bbst{L}$ is modified from BBST using $P$. 
Since there are $O(|L|)$ nodes in $\bbst{L}$ and  the tree depth is $O(\log |L|)$, 
we can assign $O(|L|)$ processors in $O(\log |L|)$ time. For each nodes $w$, we compute 
$f(S_w,R)$ and $f(R,S_w)$. This can be done using $I(S_w), I(R),$ and Lemma~\ref{lem:tangent} in $O(\log (|S_w| +|R|)$ time. 
Since $S_w \subseteq L$, $\mathcal{D}(L,R)$ can be constructed in $O(\log (|L| +|R|))$ time
using $O(|L|)$ processors.
\end{proof}

We will apply binary search for each index $i$. Each index $i$ is contained in one of the $2m$ groups. Each group $G(a,b,t)$ consists of continuous $O(\log^6 n)$ indices $\lbrace a,a+1, \ldots, a+b \rbrace$. 
For an index $i \in G(a,b,t)$, the decision on $A[i,j] \leq r$ is equivalent to the decision on
$I_{ij} = \bigcap_{k=1}^4 I(S_k)\neq\emptyset$, %\cap I(S_2) \cap I(S_3) \cap I(S_4)$,} 
where $S_1=S^+[i,b], S_2=S^+[b,n], S_3=S^-[1,j_t],$ and $S_4=S^-[j_t,j]$. 
The emptiness of $I_{ij}$ is equivalent to the emptiness of $\partial I_{ij}$. 
Since $\partial I_{ij}$ consists of boundary parts of 
$I(S_k)$ for each $k=1,\ldots, 4$, the emptiness of $I_{ij}$ is equivalent to the 
emptiness of $I_{ij} \cap \partial I(S_k)$ for every $k=1,\ldots, 4$. %\complain{every $i$?}
% $I_{ij} \cap \partial I(S_2)$, $I_{ij} \cap \partial I(S_3)$, and $I_{ij} \cap \partial I(S_4)$.}} \complain{??? I do not understand what this means.} 
%See Figure~\ref{fig:dividingsets} and Figure~\ref{fig:dividingbound}.

%\begin{figure}[tb]
%	\centering
%	\includegraphics[scale=1]{Boundary-divide.pdf}
%	\caption{Note that $\partial I_{ij}$ consist by $I_{ij} \cap \partial I(S_1)$, $I_{ij} \cap \partial I(S_2)$, $I_{ij} \cap \partial I(S_3)$, and $I_{ij} \cap \partial I(S_4)$}
%	\label{fig:dividingbound}
%\end{figure}

Now we list all the information we need in computing $\partial I(S_1)$ and $I_{ij} \cap \partial I(S_2)$.
Observe that $I_{ij} \cap \partial I(S_1)$ consists of parts of the boundaries $I_{ij} \cap \partial I(S_w)$ for the
canonical nodes $w \in U^+(i,b)$.
To compute $I_{ij} \cap \partial I(S_w)$ for a node $w \in U^+(i,b)$ using Lemma~\ref{lem:intersection}, 
we need $f(S_w,S_1)$, $f(S_w,S_2)$, $f(S_w,S_3)$ and $f(S_w,S_4)$. 
We compute $f(S_w,S_2)$ and $f(S_w,S_3)$ in this Phase. 
Then we compute $f(S_w,S_1)$ and $f(S_w,S_4)$ in Phase 3.

To compute $I_{ij} \cap \partial I(S_2)$ using lemma~\ref{lem:intersection}, 
we need $f(S_2,S_3)$ and $f(S_2,S_w)$ for every node $w \in U^+(i,b) \cup U^-(j_t,j)$.
We compute them all in this Phase.
We can compute $I_{ij} \cap \partial I(S_4)$ and $I_{ij} \cap \partial I(S_3)$ in a similar way. % to $I_{ij} \cap I(S_1)$ and $I_{ij} \cap I(S_2)$.

To cover the query range, we need to compute $f(S_w,S_2)$, $f(S_w,S_3)$, 
$f(S_2,S_w)$ and $f(S_3,S_w)$ for every 
$w \in \bbst{S^+[a,b]} \cup \bbst{S^-[j_t,j_{t+1}]}$
because $i\in[a,b]$ and $j\in [j_t,j_{t+1}]$ while applying the binary search on the group.

We compute the followings for each group $G(a,b,t)$ in this phase.

\begin{enumerate}
\item \label{item:CommonGroup} $I(S_2)$ and $I(S_3)$.
\item \label{item:CommonGroup2} $f(S_2,S_3)$ and $f(S_3,S_2)$.
\item \label{item:treeInformation1} $\mathcal{D}(S^+[a,b],S_2)$, $\mathcal{D}(S^+[a,b],S_3)$, $\mathcal{D}(S^-[j_t,j_{t+1}],S_2)$ and $\mathcal{D}(S^-[j_t,j_{t+1}],S_3)$.
\end{enumerate}

Part~\ref{item:CommonGroup} can be done in $O(\log n)$ time using
$O(\log^2 n)$ processors by Lemma~\ref{lem:query}.
Part~\ref{item:CommonGroup2} can be done in $O(\log n)$ time by
Lemma~\ref{lem:tangent}.
Part~\ref{item:treeInformation1} can be done $O(\log n)$ time using $O(\log^6 n)$ processors by Lemma~\ref{lem:data3}.
Since there are $2m= O(n/\log^6 n)$ such groups, the three parts can be done $O(\log n)$ 
time with $O(n)$ processors in total.

\subsection{Phase 3 - Binary search}

For an index $i \in G(a,b,t)$, we apply binary search over range $j \in [j_t,j_{t+1}]$. 
Since $|j_{t+1}-j_t|=O(\log^6 n)$, our algorithm performs $O(\log\log n)$ steps of 
binary search to determine whether $r_i^* \leq r$ or not.

To determine $I_{ij}\neq\emptyset$, we determine $I_{ij} \cap \partial I(S_1)\neq\emptyset$, $I_{ij} \cap \partial I(S_2)\neq\emptyset$, $I_{ij} \cap \partial I(S_3)\neq\emptyset$, or $I_{ij} \cap \partial I(S_4)\neq\emptyset$.
To determine $I_{ij} \cap \partial I(S_1) \neq \emptyset$, we determine $I_{ij} \cap \partial I(S_w) \neq \emptyset$ 
for every canonical node $w \in U^+(i,b)$. 
To determine $I_{ij} \cap \partial I(S_w)\neq\emptyset$ by Lemma~\ref{lem:intersection}, 
we  compute $f(S_w,S_2)$, $f(S_w, S_3),$ and $f(S_w, S_{w'})$ for all $ w' \in U^+(i,b) \cup U^-(j_{t},j)$, 
because the union of the intersections for $S_2$, $S_3$ and all $S_{w'}$ is $I_{ij}$.
We can compute $f(S_w, S_2)$ and $f(S_w, S_3)$ in $\mathcal{D}(S^+[a,b], S_2)$ and 
$\mathcal{D}(S^+[a,b], S_3)$, respectively, at the corresponding nodes $w$.
We can compute $f(S_w, S_{w'})$ using $I(S_w)$ and $I(S_{w'})$, and 
the intersections $I(S_w)$ and $I(S_{w'})$ can be found in 
$\bbst{S^+[a,b]}$ or $\bbst{S^-[j_t,j_{t+1}]}$. 
Since $|S_w|=O(\log^6 n)$ and $|S_{w'}|=O(\log^6 n)$, it takes $O(\log \log n)$ time to compute 
$f(S_w,S_{w'})$ for fixed $w$ and $w'$ by Lemma~\ref{lem:intersection}.
Since $|U^+(i,b) \cup U^-(j_{t},j)|=O(\log \log n)$, it takes $O(\log^2 \log n)$ time to compute 
$f(S_w,S_{w'})$ for all $w' \in |U^+(i,b) \cup U^-(j_{t},j)|$. 
Thus we can determine $I_{ij} \cap \partial I(S_w)\neq\emptyset$ in $O(\log^2 \log n)$ time.  
Since $|U^+(i,b)|= O(\log \log n)$, 
 $I_{ij} \cap \partial I(S_1)$ can be determined $O(\log^3 \log n)$ time with one processor.

To determine $I_{ij} \cap \partial I(S_2)\neq\emptyset$, 
we need $f(S_2,S_3)$ and $f(S_2,S_w)$ for all $w \in U^+(i,b) \cup U^-(j_t,j)$.
We already have $f(S_2,S_3)$ and $f(S_2,S_w)$ can be computed from $\mathcal{D}(S^+[a,b],S_2)$ and $\mathcal{D}(S^-[j_t,j_{t+1}],S_2)$ in $O(\log \log n)$ time. 
So we can determine $I_{ij} \cap \partial I(S_2)\neq\emptyset$ in $O(\log \log n\log\log\log n)$ time with one processor.

Therefore, for a fixed $i$ and a given $j$, we can determine whether
$I_{ij}$ is empty or not in $O(\log^3 \log n)$ time. Remind that the search range for an index 
is $O(\log^6 n)$. There are $O(\log \log n)$ steps of binary search and there are $n+1$ indices
for $i$. The decision step can be done in $O(\log^4 \log n)$ time using $O(n)$ processors.

\begin{theorem}
  The decision problem for the restricted 2-center problem can be
  solved in $O(\log n)$ time using $O(n)$ processors after
  $O(n\log n)$-time preprocessing.
\end{theorem}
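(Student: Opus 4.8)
The plan is to glue together the three phases already set up above into one parallel decision procedure, and to check that the processor budget is $O(n)$ while each phase runs in $O(\log n)$ time. The $r$-independent work goes into the preprocessing, and the $r$-dependent work (Phases~2 and 3) becomes the parallel decision algorithm. Concretely, the preprocessing builds $\bbst{S^+}$ and $\bbst{S^-}$ in $O(n\log n)$ time by Lemma~\ref{lem:TS}; then, after $O(n\log n)$ additional time, it computes the monotone indices $i_0\le i_1\le\cdots\le i_m$, forms the at most $2m=O(n/\log^6 n)$ groups $G(a,b,t)$ (each spanning $O(\log^6 n)$ values of $i$ and of $j$), and builds $\bbst{S^+[a,b]}$ and $\bbst{S^-[j_t,j_{t+1}]}$ for every group. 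Since the per-group trees have total size $O(n)$, this all fits in $O(n\log n)$ preprocessing time, and it does not depend on $r$. We only ever query these structures with $r$ in the range $(r_1,r_2]$ on which $I_r(\cdot)$ has the combinatorial structure of $I_{r^*}(\cdot)$, which is exactly the range Lemma~\ref{lem:TS} and Lemma~\ref{lem:query} need and exactly the range fed to the decision algorithm by Cole's parametric search.

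Given such an $r$, I would run Phase~2 in parallel over all groups. For a single group $G(a,b,t)$: compute $I(S_2)=I(S^+[b,n])$ and $I(S_3)=I(S^-[1,j_t])$ by Lemma~\ref{lem:query} in $O(\log n)$ time with $O(\log^2 n)$ processors; compute $f(S_2,S_3)$ and $f(S_3,S_2)$ by Lemma~\ref{lem:tangent} in $O(\log n)$ time; and build the four data structures $\mathcal{D}(S^+[a,b],S_2)$, $\mathcal{D}(S^+[a,b],S_3)$, $\mathcal{D}(S^-[j_t,j_{t+1}],S_2)$, $\mathcal{D}(S^-[j_t,j_{t+1}],S_3)$ by Lemma~\ref{lem:data3} in $O(\log n)$ time with $O(\log^6 n)$ processors. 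Thus one group uses $O(\log^6 n)$ processors for $O(\log n)$ time, and since there are $O(n/\log^6 n)$ groups, all of Phase~2 runs in $O(\log n)$ time with $O(n)$ processors.

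Next, Phase~3: assign one processor to each of the $n+1$ indices $i$; the processor for $i$ (lying in group $G(a,b,t)$) performs binary search over $j\in[j_t,j_{t+1}]$ for the largest $j$ with $A[i,j]\le r$ and then tests $B[i,j]\le r$, exactly as in Wang's Algorithm~2. Each test $A[i,j]\le r$ is the emptiness test $I_{ij}=\bigcap_{k=1}^4 I(S_k)=\emptyset$, which by Lemma~\ref{lem:intersection} and the boundary order of Lemma~\ref{lem:boundary} reduces to testing $I_{ij}\cap\partial I(S_k)$ for $k=1,\dots,4$, using the precomputed $\mathcal{D}(\cdot,\cdot)$ and per-group trees; the bottleneck term $I_{ij}\cap\partial I(S_1)$ costs $O(\log^3\log n)$ time, so one emptiness test costs $O(\log^3\log n)$ and one index's binary search ($O(\log\log n)$ steps) costs $O(\log^4\log n)=O(\log n)$ with a single processor. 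Over all $i$ this is $O(\log n)$ time with $O(n)$ processors. Finally, $r^*<r$ holds iff some $i$ has $r^*_i<r$, i.e. the OR of $n+1$ bits, which is computed in $O(\log n)$ time with $O(n)$ processors. Summing the phases gives an $O(\log n)$-time, $O(n)$-processor decision algorithm after $O(n\log n)$ preprocessing.

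The main obstacle I expect is the bookkeeping of the processor allocation rather than any new geometric idea: one must argue (a Brent-type argument, using that the per-group trees and the index set have total size $O(n)$) that the $O(\log^6 n)$ processors needed per group in Phase~2 and the one processor per index in Phase~3 can be drawn from, and scheduled within, a common pool of $O(n)$ processors in $O(\log n)$ steps; and one must make sure the sub-$O(\log n)$ emptiness test of Phase~3 remains correct precisely because it is only invoked for $r$ in the combinatorial-equivalence range $(r_1,r_2]$, which is the range on which the decision algorithm is called inside the parametric search.
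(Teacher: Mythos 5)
Your proposal is correct and follows essentially the same route as the paper, which proves this theorem implicitly by summing the resource bounds of Phases 1--3 exactly as you do: $O(n\log n)$ $r$-independent preprocessing, Phase~2 in $O(\log n)$ time with $O(\log^6 n)$ processors per group over $O(n/\log^6 n)$ groups, and Phase~3 with one processor per index at $O(\log^4\log n)$ time. Your added remarks on the final OR, the Brent-style scheduling, and restricting $r$ to the combinatorial-equivalence range $(r_1,r_2]$ are standard details the paper leaves implicit, not a different approach.
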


\begin{theorem}
  The restricted 2-center problem can be solved in $O(n\log n)$ time.
\end{theorem}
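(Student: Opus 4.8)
The plan is to combine the parallel decision algorithm from the previous theorem with Cole's parametric search technique. Recall that the previous theorem gives a decision algorithm that, given a radius $r$, decides whether $r^*<r$ in $T_P=O(\log n)$ parallel time using $Q=O(n)$ processors, after an $O(n\log n)$-time preprocessing; and Wang's sequential decision algorithm runs in $T_S=O(n)$ time after the same $O(n\log n)$-time preprocessing. First I would invoke the preprocessing once, in $O(n\log n)$ time, to build $\bbst{S^+}$, $\bbst{S^-}$ and the group structure of Phase 1, noting that this preprocessing is independent of the value $r$ being tested (it depends only on the combinatorial structure of $I_r(\cdot)$, which is stable on the relevant range of radii by Lemma~\ref{lem:TS}).

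Next I would feed the parallel decision algorithm into Cole's parametric search. The key point is that each of the $O(\log n)$ parallel steps, carried out over $O(n)$ processors, performs only comparisons whose outcomes are determined by evaluating a bounded-degree algebraic predicate in the unknown parameter $r$ (the entries $r_{ij}=\max(A[i,j],B[i,j])$ and the tangency/intersection events on circular hulls are all algebraic of constant degree, as noted in the proof of Lemma~\ref{lem:intersection}). Simulating the parallel algorithm on the unknown value $r^*$, at each step one obtains $O(n)$ critical values; resolving them would naively cost $O(n\log n)$ time per step via binary search using the sequential decision algorithm, but Cole's technique amortizes this so that the total cost is $O((T_S+Q)(T_P+\log Q))$. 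Substituting $T_S=O(n)$, $Q=O(n)$, $T_P=O(\log n)$ gives $O((n+n)(\log n+\log n))=O(n\log n)$, which together with the $O(n\log n)$ preprocessing yields $r^*$ in $O(n\log n)$ time.

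Finally I would recover the optimal pair of disks themselves, not merely $r^*$: once $r^*$ is known, one runs the sequential decision algorithm once more at $r=r^*$ (or at $r^*+\eps$ symbolically) to identify the splitting indices $i$ and $j=j(i)$ achieving $r^*=r_{ij}$, and then reports the minimum enclosing disks of $S^+[i+1,n]\cup S^-[1,j]$ and of $S^+[1,i]\cup S^-[j+1,n]$, each in linear time. Since there are only a constant number of candidate points $o$ produced in $O(n)$ time, we repeat the whole procedure for each and take the best, which does not change the asymptotic bound.

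I expect the main obstacle to be the careful verification that the parallel decision algorithm of the previous theorem is of the right form for Cole's parametric search — specifically, that all the data-structure operations (splitting, gluing and merging the balanced binary search trees representing the boundary arcs, and the tangent computations of Lemma~\ref{lem:tangent}) branch only on constant-degree algebraic comparisons in $r$, and that the $O(\log n)$-depth, $O(n)$-work bound holds uniformly along the whole relevant radius interval $(r_1,r_2]$ so that no combinatorial structure changes mid-search. Making precise the bookkeeping between the global $\bbst{S^+}$ queries (Lemma~\ref{lem:query}) and the per-group structures (Phase 2 and Phase 3) so that the decision algorithm genuinely runs in $O(\log n)$ parallel steps with $O(n)$ processors at every tested radius is where the argument needs the most care; the parametric-search wrapper itself is then a black-box application of Cole's theorem.
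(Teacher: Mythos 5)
Your overall plan coincides with the paper's: run the $O(\log n)$-time, $O(n)$-processor parallel decision algorithm through Cole's parametric search, cite the bound $O((T_S+Q)(T_P+\log Q))$ with $T_S=O(n)$, $T_P=O(\log n)$, $Q=O(n)$, and add the one-time $O(n\log n)$ preprocessing. The difficulty, however, is that you treat Cole's theorem as a black box and defer exactly the hypothesis that the paper's proof is devoted to establishing. Cole's improvement from the generic Megiddo bound (which here would only give $O(n\log^2 n)$, since each of the $O(\log n)$ parallel rounds would cost $O(n\log n)$ for sorting and binary searching over critical values) to $O((T_S+Q)(T_P+\log Q))$ is \emph{not} automatic for an arbitrary comparison-based parallel algorithm: it requires that the computation be organized as a network (or a constant number of networks) with bounded fan-in and bounded fan-out, so that the weighted-median resolution of comparisons can be amortized across rounds. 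Your proposal flags ``the right form for Cole's parametric search'' as the main obstacle but then identifies that form with the comparisons being constant-degree algebraic in $r$ --- that condition is needed for any parametric search (Megiddo's included) and does not by itself license Cole's bound.

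The paper's proof consists precisely of the missing verification: it decomposes the parallel decision algorithm into a constant number of bounded fan-in/fan-out stages --- in the group-information phase, $I(S_2)$ is built by computing $I(S_2)\cap\partial I(S_w)$ for the canonical nodes $w$, with each processor activating at most one other processor; $f(S_2,S_3)$ is computed independently per group by a single processor; processors are assigned to all nodes $w$ in advance, independently of $r$, each computing $f(S_2,S_w)$ and $f(S_w,S_2)$ on its own; and in the binary-search phase each index is handled by one processor independently. Without an argument of this kind (or an explicit statement that the algorithm as designed already has this circuit structure), the step ``Cole's technique amortizes this so that the total cost is $O((T_S+Q)(T_P+\log Q))$'' is unsupported, and your derivation of $O(n\log n)$ does not go through. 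The rest of your write-up (reusing the $r$-independent preprocessing justified by Lemma~\ref{lem:TS}, and recovering the two disks from the optimal indices in linear time) is fine; note also that for the \emph{restricted} problem the point $o$ is part of the input, so the final iteration over candidate centers $o$ is unnecessary here, though harmless.
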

\begin{proof}
We use Cole's parametric search technique~\cite{Cole-parametric} to compute the optimal radius $r^*$. 
To apply the technique, the parallel algorithm must satisfy a bounded fan-in/bounded fan-out requirement.
Instead of analyzing the parallel algorithm directly, we divide the algorithm to 
a constant number of parts such that each part satisfies a bounded fan-in/bounded fan-out. 

  Our parallel decision algorithm computes some information for each group and
  applies binary search on indices.
  In the group information phase, 
  we compute $I(S_2)$ by computing $I(S_2) \cap \partial I(S_w)$ 
  for every $w \in U^+(b,n)$ and merging them. 
  Each $I(S_2) \cap \partial I(S_w)$ is computed by computing $f(S_w,S_{w'})$ for every 
  $w' \in U^+(b,n)$ and merging them. 
  Thus, a processor activates at most one processor, and the fan-out is bounded. 
  We also compute $f(S_2,S_3)$ for each group independently by a processor, 
  and thus the fan-in and fan-out are bounded.
  We assign processors for all nodes $w$ without knowing the decision parameter $r$ in advance.
  Then each processor computes $f(S_2,S_w)$ and $f(S_w,S_2)$ for a node $w$ independently,
  and thus the fan-in and fan-out are bounded. 
  In the binary search phase, each binary search for an index is performed by a processor 
  independently, and thus the fan-in and fan-out are bounded.
  
  Therefore, our parallel algorithm consists of a constant number of fan-in or fan-out
  bounded networks. Thus we can apply Cole's parametric
  search technique to compute $r^*$ in $O((T_S+Q)(T_P+ \log Q))$ time, 
  where $T_S$ is the sequential decision time, $T_P$ is the parallel decision time and 
  $Q$ is the number of processors for parallel decision algorithm. 
  Since we have $T_S=O(n)$, $T_P=O(\log n)$ and $Q=O(n)$,
  $r^*$ can be computed in $O(n\log n)$ time.
\end{proof}

\section{Optimal 2-center for points in convex position}
In this section, we consider the 2-center problem for points in convex position.
Wang gave an $O(n\log n\log\log n)$-time algorithm for
this problem.
% We show how to apply our $O(n\log n)$-time algorithm to the case that
% the points of $S$ are in convex position. 
Let $S$ be a point set consisting of $n$ points in convex position in the plane.
We denote by $\conv(S)$ the convex hull of $S$.
It is known that there is an
optimal solution $(D_1^*, D_2^*)$ such that $D_1^*$ covers a set of
consecutive vertices (points of $S$) along $\partial \conv(S)$ and
$D_2^*$ covers the remaining points of $S$~\cite{kim2000}.  Since the
points are in convex position, for any point $q$ contained in
$\conv(S)$, the points of $S$ appears in the same order around $q$.

Let $\mu_1$ and $\mu_2$ be the two rays from $o$ that separate the
point set $S$ into two subsets, one covered by $D_1^*$ and the other
covered by $D_2^*$.  We need to find a line $\ell$ that separates
$\mu_1$ and $\mu_2$. The line $\ell$ partitions the points of $S$ to $S^+$ and
$S^-$. Then $o$ can be any point in $\ell$.

Wang gave an algorithm that finds the optimal two disks or the line
$\ell$ in $O(n\log n)$ time~\cite{WANG2020}.  The algorithm first
sorts the points of $S$ along the boundary of $\conv(S)$ and pick any
point $p_1\in S$. Then it finds $p^*\in S$ such that the two congruent
smallest disks $(D_1, D_2)$ with $D_1$ covering $\conv(S)[p_1,p^*]$
and $D_2$ covering $S\setminus \conv(S)[p_1,p^*]$ have the minimum
radius over all $p\in S$. % For a fixed $p_1$,
The radius of $D_1$ covering $\conv(S)[p_1,p_j]$ does not decrease
while $p_j$ moves along the boundary of $\conv(S)$.  Similarly, $D_2$
has this property. In each step of binary search, we compute $D_1$
and $D_2$ in $O(n)$ time. Thus, the algorithm can find $p^*$ in
$O(n\log n)$ time using binary search. % and $D_1$ in $O(n)$ time.  
If $\mu_1$ passes through $p_1$, we already have the minimum radius.
Otherwise, $\mu_1$ or $\mu_2$ passes through $p^*$ or one of its two
neighboring point along $\conv(S)$, or $\mu_1$ crosses
$\conv(S)[p_1,p^*]$ and $\mu_2$ crosses $\conv(S)[p^*,p_1]$.  So we
can find the optimal disks or the line $\ell$ that separates $\mu_1$
and $\mu_2$ in $O(n\log n)$ time.

Therefore, we can apply our algorithm in Section~\ref{sec:ours} to the
points in convex position. From this, we improve the running time
by a $\log\log n$ factor over the $O(n\log n\log\log n)$-time algorithm
by Wang.
\begin{theorem}
  The 2-center problem for $n$ points in convex position in the plane
  can be solved in $O(n\log n)$ time.
\end{theorem}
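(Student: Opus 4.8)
The plan is to reduce the convex-position 2-center problem to the restricted 2-center problem solved in Section~\ref{sec:ours}, which we already know takes $O(n\log n)$ time. To do this we must, within the same $O(n\log n)$ time budget, either discover an optimal pair of disks outright or produce a point $o$ in $D_1^*\cap D_2^*$ together with a line $\ell$ through $o$ separating the two separating rays $\mu_1,\mu_2$; once we have such an $o$ and $\ell$, the points split into $S^+$ and $S^-$ exactly as required by the restricted problem, and we invoke the algorithm of the previous section verbatim.

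First I would sort the $n$ points along $\partial\conv(S)$ in $O(n\log n)$ time and fix an arbitrary vertex $p_1$. Next I would search, over the choice of the second ``cut'' vertex $p_j$ moving along the boundary, for the position $p^*$ minimizing $\max$ of the radii of the two smallest enclosing disks $D_1$ of $\conv(S)[p_1,p_j]$ and $D_2$ of $S\setminus\conv(S)[p_1,p_j]$. The key monotonicity fact (from~\cite{kim2000}) is that the radius of the smallest disk enclosing $\conv(S)[p_1,p_j]$ is nondecreasing as $p_j$ advances, and symmetrically for $D_2$; hence the maximum of the two is unimodal in $j$, so a binary search over $j$ suffices. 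Each probe computes two smallest enclosing disks in $O(n)$ time (e.g.\ Megiddo's linear-time algorithm), so this whole step costs $O(n\log n)$.

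Then I would do the case analysis to extract $o$ and $\ell$. If one of $\mu_1,\mu_2$ happens to pass through $p_1$, the disks found when $p_j=p_1$ (i.e.\ the pair where one cut is at $p_1$) are already optimal and we are done. Otherwise $\mu_1$ crosses the arc $\conv(S)[p_1,p^*]$ while $\mu_2$ crosses $\conv(S)[p^*,p_1]$, or else one of $\mu_1,\mu_2$ passes through $p^*$ or one of its two boundary neighbors; in the latter sub-case we again have the optimal disks directly, and in the former we know a line separating $\mu_1$ from $\mu_2$ (any line separating the two boundary arcs through which they exit), giving us $\ell$, and any point of $\ell\cap\conv(S)$ serves as $o$ since $o$ must lie in $D_1^*\cap D_2^*$ by the structure of the separating rays. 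Finally I would feed $(S,o,\ell)$ into the restricted 2-center algorithm of Section~\ref{sec:ours}, obtaining the optimal pair of disks in $O(n\log n)$ additional time, for $O(n\log n)$ total.

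The main obstacle is justifying the case analysis rigorously: one must argue that the only possibilities at optimum are (i) a separating ray through $p^*$ or a neighbor of $p^*$, or (ii) $\mu_1$ and $\mu_2$ exiting through the two complementary arcs determined by $p_1$ and $p^*$, and that in case (ii) the arcs are genuinely separated by a line through a point of $\conv(S)$. This rests on the convex-position structure theorem of~\cite{kim2000} (consecutive-arc optimal partition) together with the unimodality used above, and is handled by the argument sketched in the discussion preceding the theorem; once that is in place, everything else is a black-box call to the restricted algorithm and the improvement by a $\log\log n$ factor over Wang's bound is immediate.
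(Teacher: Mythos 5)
Your proposal follows essentially the same route as the paper: sort the points along $\partial\conv(S)$, binary search over the cut vertex using the monotonicity of the two enclosing-disk radii to find $p^*$ in $O(n\log n)$ time, then either output the optimal disks directly or obtain the separating line $\ell$ (and a point $o$ on it) and invoke the restricted 2-center algorithm of Section~\ref{sec:ours}. The case analysis you flag as the remaining obstacle is exactly what the paper also delegates to Wang's argument and the structure theorem of~\cite{kim2000}, so your proof is correct and matches the paper's.
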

\section{Conclusions}
We presented a deterministic $O(n \log n)$-time algorithm for the case that the centers of 
the two optimal disks are close together, that is, the overlap of the two optimal disks 
is a constant fraction of the disk area. 
Now for the planar 2-center problem, the bottleneck of the time bound
is the case that the two optimal disks are disjoint, and 
Eppstein's $O(n\log^2 n)$-time algorithm is best for the case. 
Thus, the time for the planar 2-center problem still
remains to be $O(n\log^2 n)$ due to the well-separated case.

We also presented a deterministic $O(n\log n)$-time algorithm for 
$n$ points in convex position in the plane. This closes the long-standing
question for the convex-position case.

\bibliography{references}{}

\end{document}